\newtheorem{thm}{Theorem}
\newtheorem{lemma}{Lemma}
\newtheorem{corollary}{Corollary}
\newtheorem{proposition}{Proposition}
\begin{document}
%
\title{Cram\'er-Rao Bound Optimization for Joint Radar-Communication Design}
%
%
%

\author{Fan Liu,~\IEEEmembership{Member,~IEEE,}
        Ya-Feng Liu,~\IEEEmembership{Senior Member,~IEEE,}
        Ang Li,~\IEEEmembership{Member,~IEEE,}\\
        Christos Masouros,~\IEEEmembership{Senior~Member,~IEEE,}
        and~Yonina C. Eldar,~\IEEEmembership{Fellow,~IEEE}
\thanks{F. Liu is with the Department of Electrical and Electronic Engineering, Southern University of Science and Technology, Shenzhen 518055, China (e-mail: liuf6@sustech.edu.cn).}
\thanks{Y.-F. Liu is with the State Key Laboratory of Scientific and Engineering Computing, Institute of Computational Mathematicsand Scientific/Engineering Computing, Academy of Mathematics and Systems Science, Chinese Academy of Sciences, Beijing 100190, China (e-mail: yafliu@lsec.cc.ac.cn).}
\thanks{A. Li is with the School of Information and Communications Engineering, Faculty of Electronic and Information Engineering, Xi'an Jiaotong University, Xi'an, Shaanxi 710049, China. (e-mail: ang.li.2020@xjtu.edu.cn)}
\thanks{C. Masouros is with the Department of Electronic and Electrical Engineering, University College London, London, WC1E 7JE, UK (e-mail: chris.masouros@ieee.org).}
\thanks{Y. C. Eldar is with the Faculty of Mathematics and Computer Science, Weizmann Institute of Science, Rehovot, Israel (e-mail: yonina.eldar@weizmann.ac.il).}
}

\maketitle

\begin{abstract}
In this paper, we propose multi-input multi-output (MIMO) beamforming designs towards joint radar sensing and multi-user communications. We employ the Cram\'er-Rao bound (CRB) as a performance metric of target estimation, under both point and extended target scenarios. We then propose minimizing the CRB of radar sensing while guaranteeing a pre-defined level of signal-to-interference-plus-noise ratio (SINR) for each communication user. For the single-user scenario, we derive a closed form for the optimal solution for both cases of point and extended targets. For the multi-user scenario, we show that both problems can be relaxed into semidefinite programming by using the semidefinite relaxation approach, and prove that the global optimum can always be obtained. Finally, we demonstrate numerically that the globally optimal solutions are reachable via the proposed methods, which provide significant gains in target estimation performance over state-of-the-art benchmarks.
\end{abstract}

\begin{IEEEkeywords}
Dual-functional radar-communication, joint beamforming, Cram\'er-Rao bound, semidefinite relaxation, successive convex approximation.
\end{IEEEkeywords}

%
\IEEEpeerreviewmaketitle

\section{Introduction}
\IEEEPARstart{W}{ireless} sensors and communication systems have shaped modern society in profound ways. 5G and beyond network is envisioned as an enabler for many emerging applications, such as intelligent connected vehicles and remote health-care. These applications demand wireless connectivity with tremendously increased data rates, substantially reduced latency, high-accuracy localization capability and support for massive devices \cite{8246850,8360863}. Indeed, in many location-aware services and applications, sensing and communications are recognized as a pair of intertwined functionalities, which are often required to operate simultaneously \cite{9127852,8828016}.
\\\indent To reduce costs and improve spectral-, energy-, and hardware-efficiency, the need for joint design of sensing and communication systems naturally arises in the above-mentioned scenarios \cite{8355705,8892631,7279172,7347464,7485066,9124713,8288677,8386661,9093221,ma2020spatial}. The integration between radar sensors and communication systems has received considerable attention from both industry and academia, motivating research on \emph{Dual-functional Radar-Communication (DFRC) Systems} \cite{8999605,7279172,7347464,7485066,7485316,9124713,8288677,8386661,9093221,ma2020spatial}. DFRC techniques combine both radar sensing and wireless communications via shared use of the spectrum, the hardware platform and a joint signal processing framework \cite{9127852,8999605}.
\\\indent One of the major challenges in DFRC is the design of a joint waveform, capable of the dual functionalities of target sensing and information delivering. The design methodology can be generally split into three categories: radar-centric design, communication-centric design, and joint design. Radar-centric approaches are built on the basis of a radar probing signal, which can be traced back to the early work \cite{mealey1963method}, where the communication data are modulated onto the radar pulses by pulse interval modulation (PIM). In this spirit, one may design a DFRC waveform by using a radar probing signal as an information carrier. Such examples include the combination of amplitude/phase shift keying (ASK/PSK) and linear frequency modulation (LFM) signals \cite{roberton2003integrated,saddik2007ultra}, as well as the marriage between spread spectrum sequences for communication and binary- and poly-phase codes for radar \cite{jamil2008integrated}.
\\\indent Communication-centric schemes rely on existing communication waveforms and standard-compatible protocols. For instance, the seminal work of \cite{sturm2011waveform} proposed employing orthogonal frequency division multiplexing (OFDM) for the use of target detection. The Doppler and delay processing for radar targets are decoupled in OFDM signals, and can be performed by the classical Fast Fourier Transform (FFT) and its inverse. The random impact of the communication data imposed on the target echo can be readily eliminated by element-wise division between the signal matrices of both the echo and the reference OFDM waveform. To further improve sensing performance, one may replace the sinusoidal carrier in the OFDM waveform with an LFM signal, which can be efficiently processed by leveraging the fractional Fourier transform (FrFT) \cite{7485314}. More recently, the IEEE 802.11ad protocol, which is a WLAN standard operating in the mmWave band, has been exploited to accomplish both radar sensing and communication tasks in vehicular networks \cite{8114253}.
\\\indent More relevant to this work, the DFRC waveform can be designed via jointly considering both functionalities, rather than based on existing radar or communication waveforms \cite{7279172,9124713}. Thanks to the higher degrees-of-freedom (DoFs) brought by multi-antenna arrays, these techniques have been well-explored recently from a spatial-domain perspective. Pioneered by \cite{7347464}, the authors proposed to embed the communication data by varying the sidelobes of the spatial beampattern of the MIMO radar, where the mainlobe is exploited solely for target detection. Again, the communication symbols can be expressed in the forms of various modulation formats, including ASK and PSK \cite{7347464,7485066}. However, such an approach supports only line-of-sight (LoS) communications, since the communication receivers must be within the correct sidelobe region in order to receive the symbols. To exploit both spatial and frequency diversity, one can embed communication data into the MIMO radar waveform through index or spatial modulation schemes \cite{9093221,ma2020spatial}, which are capable of tackling more complex non-LoS (NLoS) communication channels.
\\\indent The above-mentioned approaches are generally categorized as inter-pulse modulation schemes, which represent one communication codeword by a single radar pulse, and, consequently, result in low data rate tied to the pulse repetition frequency (PRF) of the radar. To improve communication performance, the work of \cite{8288677} proposed a beamforming design tailored to joint target sensing and multi-user NLoS communications, where the data symbols can be accommodated in each of the radar fast-time snapshots. As a step further, \cite{8386661} proposed a number of DFRC waveform optimization approaches, with minimizing the multi-user interference (MUI) as the objective function given specific radar constraints.
\\\indent To further exploit the spatial DoFs, massive MIMO arrarys, which are at the core of the 5G physical layer, have also been exploited for DFRC design. It has been shown by asymptotic analysis that when the size of the antenna array is sufficiently large, the target can be detected using only a single fast-time snapshot \cite{8962251}. Together with the use of mmWave and hybrid analog-digital beamforming techniques, the radar sensing functionality can be incorporated into the massive MIMO base station (BS) \cite{8999605,8550811,8827589}, which can be deployed in the 5G/B5G vehicular network as road side unit (RSU) for both vehicular communication and localization \cite{9171304}.
\\\indent While existing DFRC signalling strategies achieve favorable performance tradeoffs between radar and communications, they typically focus on the transmitter design, rather than on optimizing a bottom-line performance metric. More precisely, the target estimation performance for DFRC systems, which is characterized at the receiver side, is guaranteed implicitly by waveform shaping constraints, e.g., to approach some well-designed radar waveforms/beampatterns that are featured with good estimation capability under communication constraints \cite{9124713,8288677,8386661}. To the best of our knowledge, explicit optimization of estimation performance metrics has not been systematically studied in the context of DFRC design.
\begin{figure*}[ht]
\normalsize
\newcounter{MYtempeqncnt2}
\setcounter{MYtempeqncnt2}{\value{equation}}
\setcounter{equation}{3}
\begin{equation}\label{eq5}
\begin{gathered}
  \operatorname{CRB} \left( \theta  \right) = \frac{{\sigma _R^2\operatorname{tr} \left( {{{\mathbf{A}}^H}\left( \theta  \right){\mathbf{A}}\left( \theta  \right){{\mathbf{R}}_X}} \right)}}{{{{2\left| \alpha  \right|}^2}L\left( {\operatorname{tr} \left( {{{{\mathbf{\dot A}}}^H}\left( \theta  \right){\mathbf{\dot A}}\left( \theta  \right){{\mathbf{R}}_X}} \right)\operatorname{tr} \left( {{{\mathbf{A}}^H}\left( \theta  \right){\mathbf{A}}\left( \theta  \right){{\mathbf{R}}_X}} \right) - {{\left| {\operatorname{tr} \left( {{{{\mathbf{\dot A}}}^H}\left( \theta  \right){\mathbf{A}}\left( \theta  \right){{\mathbf{R}}_X}} \right)} \right|}^2}} \right)}}, \hfill \\
  \operatorname{CRB} \left( \alpha \right) = \frac{{\sigma _R^2\operatorname{tr} \left( {{{{\mathbf{\dot A}}}^H}\left( \theta  \right){\mathbf{\dot A}}\left( \theta  \right){{\mathbf{R}}_X}} \right)}}{{L\left( {\operatorname{tr} \left( {{{\mathbf{A}}^H}\left( \theta  \right){\mathbf{A}}\left( \theta  \right){{\mathbf{R}}_X}} \right)\operatorname{tr} \left( {{{{\mathbf{\dot A}}}^H}\left( \theta  \right){\mathbf{\dot A}}\left( \theta  \right){{\mathbf{R}}_X}} \right) - {{\left| {\operatorname{tr} \left( {{{{\mathbf{\dot A}}}^H}\left( \theta  \right){\mathbf{A}}\left( \theta  \right){{\mathbf{R}}_X}} \right)} \right|}^2}} \right)}}. \hfill \\
\end{gathered}
\end{equation}
\setcounter{equation}{\value{MYtempeqncnt2}}
\hrulefill
\vspace*{4pt}
\end{figure*}
\\\indent In this paper, we propose a design framework for multi-user MIMO (MU-MIMO) DFRC beamforming, with a specific emphasis on optimization of the target estimation performance, measured by the CRB for unbiased estimators \cite{kay1998fundamentals}. We consider two types of target models: point target and extended target, and derive the corresponding CRB expressions as functions of the beamforming matrix. We then formulate optimization problems to minimize the CRB, subject to individual SINR constraints for the users as well as a transmit power budget. Since the resulting formulations of the beamforming designs are non-convex, we propose semidefinite relaxation (SDR) approaches to solve them. To find the exact global optimum, we prove that rank-one solutions can be obtained for both problems. Finally, we provide numerical results to validate the performance of the proposed approaches. The simulations demonstrate that our designed beamformers significantly outperform the beampattern approximation based DFRC designs by reducing estimation errors.

We summarize our contributions as follows:
\begin{itemize}
  \item We propose a CRB-based framework for MU-MIMO DFRC beamforming design for both point and extended target estimation, with guaranteed SINRs for communication users.
  \item We analyze the structure of both problems when there is only a single communication user, and derive the optimal closed-form solution.
  \item We solve the multi-user DFRC beamforming problem in the point target scenario by the SDR approach \cite{5447068} when there are multiple users. Moreover, we prove that rank-one solutions can be obtained in general.
  \item We solve the multi-user DFRC beamforming problem in the extended target scenario also by the SDR approach, and prove that the rank-one globally optimal solution can be simply extracted in closed form.
\end{itemize}

The remainder of this paper is organized as follows: Section II introduces the system model. Section III derives the CRB for both point and extended target estimation. Section IV and V discuss the joint beamforming design for point and extended target scenarios. In Section VI we present numerical results, and finally Section VII concludes the paper.
\\\indent {\emph{Notations}}: Matrices are denoted by bold uppercase letters (i.e., $\mathbf{H}$), vectors are represented by bold lowercase letters (i.e., $\mathbf{w}$), and scalars are denoted by normal font (i.e., $L$); $\operatorname{tr}\left(\cdot\right)$ and $\operatorname{vec}\left(\cdot\right)$ denote the trace and the vectorization operations, $\left(\cdot\right)^T$, $\left(\cdot\right)^H$, and $\left(\cdot\right)^*$ stand for transpose, Hermitian transpose, and complex conjugate of the matrices. We use $\operatorname{Re}\left(\cdot\right)$ and $\operatorname{Im}\left(\cdot\right)$ to denote the real and imaginary parts of the argument. $l_2$ norm and the Frobenius norm are written as $\left\| \cdot\right\|$ and $\left\| \cdot\right\|_F$.
\section{System Model}
\subsection{System Setting}
We consider a MIMO DFRC BS equipped with $N_t$ transmit antennas and $N_r$ receive antennas, which is serving $K$ downlink single-antenna users while detecting a single target, as depicted in Fig. 1. Without loss of generality, we assume $K < N_t < N_r$.
\\\indent Let $\mathbf{X} \in \mathbb{C}^{N_t \times L}$ be a narrowband DFRC signal matrix, with $L > N_t$ being the length of the radar pulse/communication frame. From a communication perspective, $x_{i,j}$, i.e., the $\left(i,j\right)$-th entry of $\mathbf{X}$, represents the discrete signal sample transmitted at the $i$-th antenna and the $j$-th time slot. For the radar, $x_{i,j}$ is the $j$-th fast-time snapshot transmitted at the $i$-th antenna. In practice, $x_{i,j}$ needs to be associated with a sub-pulse or a pulse-shaping filter in order to formulate a continuous-time signal.
\\\indent By transmitting $\mathbf{X}$ to sense the target, the reflected echo signal matrix at the receiver DFRC BS is given by
\begin{equation}\label{eq1}
  {{\mathbf{Y}}_R} = {\mathbf{GX}} + {{\mathbf{Z}}_R},
\end{equation}
where ${{\mathbf{Z}}_R} \in \mathbb{C}^{N_r \times L}$ denotes an additive white Gaussian noise (AWGN) matrix, with variance of each entry being $\sigma_R^2$, and ${\mathbf{G}} \in {\mathbb{C}^{{N_r} \times {N_t}}}$ represents the target response matrix \cite{8579200}. The matrix $\mathbf{G}$ can be of different forms depending on the specific target models. In particular, we will consider the following two scenarios:
\begin{enumerate}
  \item \emph{Point target:} In this case, the target is located in the far field, e.g., a UAV that is far away from the BS, which can be thus viewed as a single point. The target response matrix can be written as
      \begin{equation}\label{eq2}
        {\mathbf{G}} = \alpha {\mathbf{b}}\left( \theta  \right){{\mathbf{a}}^H}\left( \theta  \right) \triangleq \alpha \mathbf{A}\left(\theta\right),
      \end{equation}
      where $\alpha \in \mathbb{C}$ represents the reflection coefficient, which contains both the round-trip path-loss and the radar cross-section (RCS) of the target, $\theta$ is the azimuth angle of the target relative to the BS, and finally $\mathbf{a}\left(\theta\right) \in \mathbb{C}^{N_t \times 1}$ and $\mathbf{b}\left(\theta\right) \in \mathbb{C}^{N_r \times 1}$ are steering vectors of the transmit and receive antennas, which are assumed to be a uniform linear array (ULA) with half-wavelength antenna spacing.
  \item \emph{Extended target:} In this case, the target is located in the near field, which is typically modeled as a surface with a large number of distributed point-like scatterers, such as a vehicle or a pedestrian moving on the road. Consequently, $\mathbf{G}$ can be further expressed as
      \begin{equation}\label{eq4}
        {\mathbf{G}} = \sum\limits_{i = 1}^{N_s} {{\alpha _i}{\mathbf{b}}\left( {{\theta _i}} \right){{\mathbf{a}}^H}\left( {{\theta _i}} \right)},
      \end{equation}
      where $N_s$ is the number of scatterers, $\alpha_i$ and $\theta_i$ denotes the reflection coefficient and the angle of the $i$-th scatterer. Note that due to the narrowband assumption, we consider only the angular spread of the target, and assume that all the point-like scatterers are located in the same range bin\footnote{While we focus on the single extended target here, we note that the model in (\ref{eq4}), as a generalization of (\ref{eq2}), can also represent multiple point targets in the far field.}. For extended targets, we are typically interested in estimating the response matrix $\mathbf{G}$ directly \cite{8579200}. Given $\mathbf{G}$, one can extract the angle and reflection coefficients of each point scatterer from the estimate of $\mathbf{G}$, using various approaches, such as MUSIC and APES algorithms \cite{506612,1143830}.
\end{enumerate}
Below we elaborate on the radar and communication performance metrics for both point and extended target scenarios. In particular, we rely on the CRB for target estimation, which is a lower bound on the variance of unbiased estimators, and employ the per-user SINR to measure the communication quality-of-service (QoS).
\begin{figure}[!t]
    \centering
    \includegraphics[width=0.7\columnwidth]{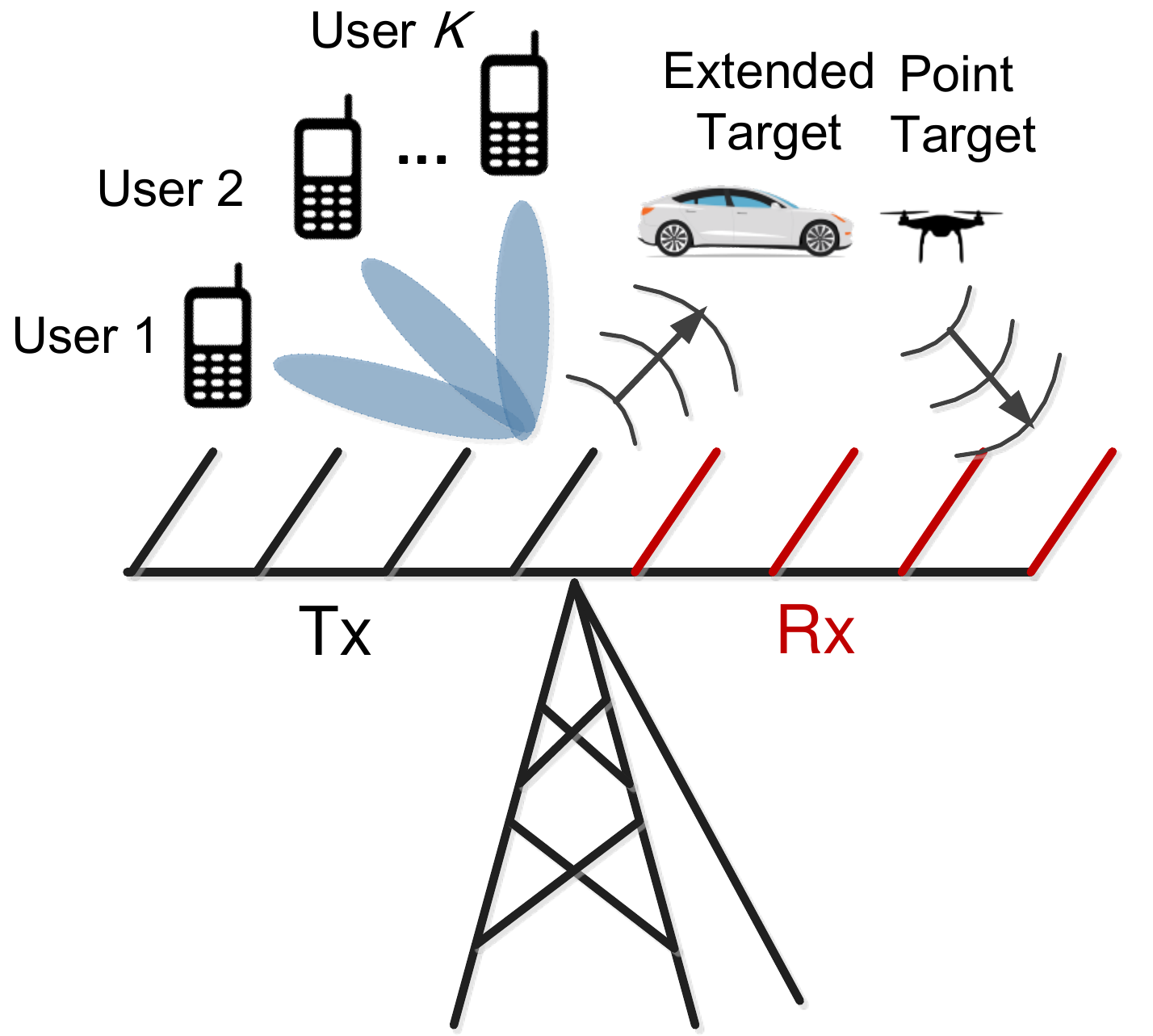}
    \caption{Dual-functional Radar-Communication System.}
    \label{fig:1}
\end{figure}
\subsection{Performance Metrics for the Point Target Case}
For the point target case, the CRBs for $\theta$ and $\alpha$ were derived in \cite{kay1998fundamentals,1703855}, and are given by (\ref{eq5}), where
\setcounter{equation}{4}
\begin{equation}\label{eq6}
  {{\mathbf{R}}_X} = \frac{1}{L}{\mathbf{X}}{{\mathbf{X}}^H} = \frac{1}{L}{\mathbf{W}}_D{{\mathbf{S}}_C}{\mathbf{S}}_C^H{{\mathbf{W}_D}^H} = {\mathbf{W}}_D{{\mathbf{W}}_D^H}
\end{equation}
is the sample covariance matrix of $\mathbf{X}$, and ${\mathbf{\dot A}}\left( \theta  \right) = \frac{{\partial {\mathbf{A}}\left( \theta  \right)}}{{\partial \theta }}$.
\\\indent By transmitting $\mathbf{X}$ to $K$ users, the received signal matrix at the communication receivers is
\begin{equation}\label{eq7}
  {{\mathbf{Y}}_C} = {\mathbf{HX}} + {{\mathbf{Z}}_C},
\end{equation}
where ${{\mathbf{Z}}_C} \in \mathbb{C}^{K \times L}$ is an AWGN matrix with the variance of each entry being $\sigma_C^2$, and ${\mathbf{H}} = {\left[ {{{\mathbf{h}}_1},{{\mathbf{h}}_2}, \ldots ,{{\mathbf{h}}_K}} \right]^H} \in \mathbb{C}^{K \times N_t}$ represents the communication channel matrix, which is assumed to be known to the BS, with each entry being independently distributed. The matrix $\mathbf{X}$ is given by
\begin{equation}\label{eq8}
  {\mathbf{X}} = {\mathbf{W}}_D{\mathbf{S}}_C,
\end{equation}
where $\mathbf{W}_D$ is the dual-functional beamforming matrix to be designed, and ${\mathbf{S}}_C \in \mathbb{C}^{K \times L}$ contains $K$ unit-power data streams intended for the $K$ users. The data streams are assumed to be orthogonal to each other so that
\begin{equation}\label{eq9}
  \frac{1}{L}{{\mathbf{S}}_C}{\mathbf{S}}_C^H = {{\mathbf{I}}_K}.
\end{equation}

By denoting the beamforming matrix as ${\mathbf{W}}_D = \left[ {{{\mathbf{w}}_1},{{\mathbf{w}}_2}, \ldots ,{{\mathbf{w}}_K}} \right]$, with the $k$-th column being the beamformer for the $k$-th user, its SINR is given as
\begin{equation}\label{eq10}
  {\gamma _k} = \frac{{{{\left| {{\mathbf{h}}_k^H{{\mathbf{w}}_k}} \right|}^2}}}{{\sum\nolimits_{i = 1,i \ne k}^K {{{\left| {{\mathbf{h}}_k^H{{\mathbf{w}}_i}} \right|}^2} + \sigma _C^2} }}.
\end{equation}

\subsection{Performance Metrics for the Extended Target Case}
For the extended target case, (\ref{eq1}) is nothing but a linear white Gaussian model in $\mathbf{G}$, where the Fisher Information Matrix (FIM) with respect to ${\mathbf{G}}$ is known to be \cite{4838872,kay1998fundamentals}
\begin{equation}\label{eq11}
{\mathbf{J}} = \frac{1}{{\sigma _R^2{N_r}}}{\mathbf{X}}{{\mathbf{X}}^H} = \frac{L}{{\sigma _R^2{N_r}}}{{\mathbf{R}}_X}.
\end{equation}
\\\indent By examining (\ref{eq8}) we see that $\mathbf{X} \in \mathbb{C}^{N_t \times L}$ is rank-deficient, since
\begin{equation}\label{eq12}
\operatorname{rank} \left( {\mathbf{X}} \right) \le \min \left\{ {\operatorname{rank} \left( {\mathbf{W}}_D \right),\operatorname{rank} \left( {\mathbf{S}_C} \right)} \right\} = K < N_t \le L.
\end{equation}
Consequently, if we transmit only $K$ signal streams, the DoFs available are not enough to recover the rank-$N_t$ matrix $\mathbf{G}$. Moreover, the FIM $\mathbf{J}$ becomes singular, resulting in the non-existence of the unbiased estimator according to \cite{890346,4838872}. Note that this is not an issue in the point-target scenario, in which case $K$ DoFs are more than enough to estimate $\theta$ and $\alpha$. For the extended target scenario, one may constrain $\mathbf{G}$ into some subset, and employ a modified CRB instead \cite{4838872}. This, however, leads to inevitable performance loss for target estimation, due to the lack of radar DoFs. In order to guarantee a satisfactory radar performance, we propose introducing an extra structure to matrix $\mathbf{X}$ for extending the DoFs to its maximum, i.e., $N_t$, by transmitting dedicated probing streams in addition to data streams intended for $K$ users. Note that these signal streams are dedicated to target probing, without carrying communication data. Let us consider an augmented data matrix, given as
\begin{equation}\label{eq13}
  {\mathbf{\tilde S}} = \left[ \begin{gathered}
  {\mathbf{S}_C} \hfill \\
  {{\mathbf{S}}_A} \hfill \\
\end{gathered}  \right] \in {\mathbb{C}^{{\left(K+N_t\right)} \times L}},
\end{equation}
where ${{\mathbf{S}}_A} \in {\mathbb{C}^{{{N_t}}  \times L}}$ denotes the dedicated probing streams, and is orthogonal to $\mathbf{S}_C$. Therefore, it still holds true that
\begin{equation}\label{eq14}
  \frac{1}{L}{\mathbf{\tilde S}}{{{\mathbf{\tilde S}}}^H} = {{\mathbf{I}}_{{K+N_t}}}.
\end{equation}
We further augment the beamforming matrix in the form of
\begin{equation}\label{eq15}
 {{\mathbf{\tilde W}}_D} = \left[ {{{\mathbf{w}}_1},{{\mathbf{w}}_2}, \ldots, {{\mathbf{w}}_{{K+N_t}}}} \right] = \left[ {{{\mathbf{W}}_C},{{\mathbf{W}}_A}} \right] \in {\mathbb{C}^{{\left(K+N_t\right)} \times {N_t}}},
\end{equation}
where ${\mathbf{W}_C} = \left[ {{{\mathbf{w}}_1}, \ldots, {{\mathbf{w}}_{{K}}}} \right] \in \mathbb{C}^{N_t \times K}$ is the communication beamformer, and ${{\mathbf{W}}_A} = \left[ {{{\mathbf{w}}_{K+1}},\ldots, {{\mathbf{w}}_{{K+N_t}}}} \right] \in \mathbb{C}^{N_t \times N_t}$ is the auxiliary beamforming matrix for the probing streams. By properly designing $\mathbf{\tilde W}_D$, the resulting transmitted signal matrix $\mathbf{X} = {\mathbf{\tilde W}}_D{\mathbf{\tilde S}}$ will have a full rank of $N_t$. Note that in this DFRC signal model, the overall beamformer ${\mathbf{\tilde W}}_D$ is used for sensing the extended target, for guaranteeing the estimation performance and the feasibility of unbiased estimation. The first $K$ columns of ${\mathbf{\tilde W}}_D$, i.e., ${\mathbf{W}_C}$, convey information data to $K$ users.
\\\indent Based on the above, the sample covariance matrix of $\mathbf{X}$ is then given by
\begin{equation}\label{eq16}
  {{\mathbf{R}}_X} = {{\mathbf{\tilde W}}_D}{\mathbf{\tilde W}}_D^H = {{\mathbf{W}}_C}{\mathbf{W}}_C^H + {{\mathbf{W}}_A}{\mathbf{W}}_A^H,
\end{equation}
which has a full rank of $N_t$ and is now invertible. Therefore, the CRB for $\mathbf{G}$ can be expressed as
\begin{equation}\label{eq17}
\operatorname{CRB} \left( {\mathbf{G}} \right) = \operatorname{tr} \left( {{{\mathbf{J}}^{ - 1}}} \right) = \frac{{\sigma _R^2{N_r}}}{L}\operatorname{tr} \left( {{\mathbf{R}}_X^{ - 1}} \right).
\end{equation}
The CRB above is achievable using maximum likelihood estimation (MLE). This is because the MLE of $\mathbf{G}$ is simply a linear estimation problem in the presence of the i.i.d. Gaussian noise, whose mean squared error (MSE) equals the CRB.
\\\indent Note that the dedicated probing signals would impose interference on the communication users, as $\mathbf{S}_A$ does not contain any useful information. The per-user SINR expression for the extended target case should therefore be modified as
\begin{equation}\label{eq18}
{{\tilde\gamma} _k} = \frac{{{{\left| {{\mathbf{h}}_k^H{{\mathbf{w}}_k}} \right|}^2}}}{{\sum\nolimits_{i = 1,i \ne k}^K {{{\left| {{\mathbf{h}}_k^H{{\mathbf{w}}_i}} \right|}^2} + {{\left\| {{\mathbf{h}}_k^H{{\mathbf{W}}_A}} \right\|}^2} + \sigma _C^2} }},
\end{equation}
where the radar interference is imposed in (\ref{eq18}) as part of the denominator.

\section{Joint Beamforming Design for Point Target}
\subsection{Problem Formulation}
For the point target, the beamforming optimization problem under communication user's SINR and power budget constraints is formulated as
\begin{equation}\label{eq19}
\begin{gathered}
  \mathop {\min }\limits_{{\mathbf{W}}_D} \;\operatorname{CRB} \left( \theta  \right) \hfill \\
  \text{\text{\text{s.t.}}}\;\;{{\gamma }_k} \ge {\Gamma _k},\forall k, \hfill \\
  \;\;\;\;\;\left\| {{\mathbf{W}}_D} \right\|_F^2 \le {P_T}, \hfill \\
\end{gathered}
\end{equation}
where $\Gamma_k$ is the required SINR for the $k$-th user, and $P_T$ is the transmit power budget. For notational convenience, we will not elaborate on the minimization of $\operatorname{CRB} \left( \alpha  \right)$ in the second line of (\ref{eq5}), since its form is similar to that of $\operatorname{CRB} \left( \theta  \right)$. While $\operatorname{CRB} \left( \theta  \right)$ relies on the value of $\theta$ as can be seen in (\ref{eq5}), the above problem can be interpreted as optimizing ${\mathbf{W}}_D$ with respect to a direction of interest, where there might be a potential target with azimuth angle of $\theta$. This is quite typical in the target tracking scenario where the radar wishes to beamform towards an estimated/predicted direction to track the movement of the target. Thus, we assume $\alpha$ is known, and incorporate it into the radar receive SNR.
\\\indent In what follows, we analyze problem (\ref{eq19}) under both single- and multi-user scenarios.
\subsection{Single-User Case}
In this subsection, we provide a closed-form solution to problem (\ref{eq19}) when there is only a single user. First of all, let us choose the center of the ULA antennas as the reference point, in which case the transmit steering vector and its derivative can be written as (assuming even number of antennas) \cite{1703855}
\begin{equation}\label{eq20}
  {\mathbf{a}}\left( \theta  \right) = {\left[ {{e^{ - j\frac{{{N_t-1}}}{2}\pi \sin \theta }},{e^{ - j\frac{{{N_t} - 3}}{2}\pi \sin \theta }}, \ldots ,{e^{j\frac{{{N_t-1}}}{2}\pi \sin \theta }}} \right]^T},
\end{equation}
\begin{equation}\label{eq21}
{\mathbf{\dot a}}\left( \theta  \right) = {\left[ { - j{a_1}\frac{{{N_t-1}}}{2}\pi \cos \theta , \ldots ,j{a_{{N_t}}}\frac{{{N_t-1}}}{2}\pi \cos \theta } \right]^T},
\end{equation}
where $a_i$ represents the $i$-th entry of ${\mathbf{a}}\left( \theta  \right)$. The receive steering vector and its derivative take similar forms of (\ref{eq20}) and (\ref{eq21}). It can be easily verified that due to the symmetry,
\begin{equation}\label{eq22}
  {{\mathbf{a}}^H}{\mathbf{\dot a}} = 0,{{\mathbf{b}}^H}{\mathbf{\dot b}} = 0, \forall \theta,
\end{equation}
where ${{\mathbf{a}}}$, ${{\mathbf{b}}}$, ${{\mathbf{\dot a}}}$, and ${{\mathbf{\dot b}}}$ denote ${{\mathbf{a}}} \left(\theta\right)$, ${{\mathbf{b}}}\left(\theta\right)$, ${{\mathbf{\dot a}}}\left(\theta\right)$, and ${{\mathbf{\dot b}}}\left(\theta\right)$, respectively. By letting $\mathbf{w}_1$, $\mathbf{h}_1$, and $\Gamma_1$ be the beamforming vector, the channel and the required SINR threshold in the single-user case, we have ${{\mathbf{R}}_X} = {{\mathbf{w}}_1}{\mathbf{w}}_1^H$. Leveraging the orthogonality property (\ref{eq22}) yields
\begin{equation}\label{eq23}
\begin{gathered}
  \operatorname{tr} \left( {{{\mathbf{A}}^H}{\mathbf{A}}{{\mathbf{R}}_X}} \right) = \operatorname{tr} \left( {{\mathbf{b}}{{\mathbf{a}}^H}{{\mathbf{w}}_1}{\mathbf{w}}_1^H{\mathbf{a}}{{\mathbf{b}}^H}} \right) = {\left\| {\mathbf{b}} \right\|^2}{\left| {{{\mathbf{a}}^H}{{\mathbf{w}}_1}} \right|^2}, \hfill \\
  \operatorname{tr} \left( {{{{\mathbf{\dot A}}}^H}{\mathbf{A}}{{\mathbf{R}}_X}} \right) = \operatorname{tr} \left( {{\mathbf{b}}{{\mathbf{a}}^H}{{\mathbf{w}}_1}{\mathbf{w}}_1^H\left( {{\mathbf{a}}{{{\mathbf{\dot b}}}^H} + {\mathbf{\dot a}}{{\mathbf{b}}^H}} \right)} \right)\hfill \\
   = {\left\| {\mathbf{b}} \right\|^2}{{\mathbf{a}}^H}{{\mathbf{w}}_1}{\mathbf{w}}_1^H{\mathbf{\dot a}}, \hfill \\
  \operatorname{tr} \left( {{{{\mathbf{\dot A}}}^H}{\mathbf{\dot A}}{{\mathbf{R}}_X}} \right) = \operatorname{tr} \left( {\left( {{\mathbf{\dot b}}{{\mathbf{a}}^H} + {\mathbf{b}}{{{\mathbf{\dot a}}}^H}} \right){{\mathbf{w}}_1}{\mathbf{w}}_1^H\left( {{\mathbf{a}}{{{\mathbf{\dot b}}}^H} + {\mathbf{\dot a}}{{\mathbf{b}}^H}} \right)} \right) \hfill \\
   = {\left\| {{\mathbf{\dot b}}} \right\|^2}{\left| {{{\mathbf{a}}^H}{{\mathbf{w}}_1}} \right|^2} + {\left\| {\mathbf{b}} \right\|^2}{\left| {{{{\mathbf{\dot a}}}^H}{{\mathbf{w}}_1}} \right|^2}, \hfill \\
\end{gathered}
\end{equation}
where ${\mathbf{A}} \triangleq {\mathbf{A}}\left( \theta  \right),{\mathbf{\dot A}} \triangleq {\mathbf{\dot A}}\left( \theta  \right)$. Substituting (\ref{eq23}) into (\ref{eq15}), $\operatorname{CRB}\left(\theta\right)$ can be simplified as
\begin{equation}\label{eq24}
  \operatorname{CRB} \left( \theta  \right) = \frac{{\sigma _R^2\left\| {{\mathbf{\dot b}}} \right\|{{\left| {{{\mathbf{a}}^H}{{\mathbf{w}}_1}} \right|}^2}}}{{2{{\left| \alpha  \right|}^2}L}}.
\end{equation}
Accordingly, the optimization problem (\ref{eq19}) can be recast as
\begin{equation}\label{eq25}
  \begin{gathered}
  \mathop {\max }\limits_{\mathbf{w}_1} \;{\left| {{{\mathbf{a}}^H}{{\mathbf{w}}_1}} \right|^2}  \hfill \\
  \text{\text{s.t.}}\;\;\;{\left| {{\mathbf{h}}_1^H{{\mathbf{w}}_1}} \right|^2} \ge \Gamma_1\sigma _C^2,\;{\left\| {{{\mathbf{w}}_1}} \right\|^2} \le {P_T}. \hfill \\
\end{gathered}
\end{equation}
Note that in the single-user case, the CRB minimization problem reduces to maximizing the radiation power at angle $\theta$. We next prove the following lemma.
\begin{lemma}
  The optimal solution of (\ref{eq25}) satisfies
  \begin{equation}\label{eq26}
    {{\mathbf{w}}_1} \in \operatorname{span}\left\{ {{\mathbf{a}},{{\mathbf{h}}_1}} \right\}.
  \end{equation}
\end{lemma}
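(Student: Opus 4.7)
The plan is to use an orthogonal decomposition argument. Write any candidate $\mathbf{w}_1$ as
\begin{equation*}
\mathbf{w}_1 = \mathbf{w}_{\parallel} + \mathbf{w}_{\perp},
\end{equation*}
where $\mathbf{w}_{\parallel}$ is the orthogonal projection of $\mathbf{w}_1$ onto $\operatorname{span}\{\mathbf{a},\mathbf{h}_1\}$ and $\mathbf{w}_{\perp}$ lies in its orthogonal complement, so that $\mathbf{a}^H\mathbf{w}_{\perp}=0$ and $\mathbf{h}_1^H\mathbf{w}_{\perp}=0$. The point of the decomposition is that only $\mathbf{w}_{\parallel}$ contributes to either the objective $|\mathbf{a}^H\mathbf{w}_1|^2$ or the SINR numerator $|\mathbf{h}_1^H\mathbf{w}_1|^2$, while the full $\mathbf{w}_1$ contributes to the power $\|\mathbf{w}_1\|^2 = \|\mathbf{w}_{\parallel}\|^2+\|\mathbf{w}_{\perp}\|^2$.

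First I would verify the three identities $\mathbf{a}^H\mathbf{w}_1=\mathbf{a}^H\mathbf{w}_{\parallel}$, $\mathbf{h}_1^H\mathbf{w}_1=\mathbf{h}_1^H\mathbf{w}_{\parallel}$, and $\|\mathbf{w}_1\|^2\ge\|\mathbf{w}_{\parallel}\|^2$, which follow immediately from the orthogonality properties of the decomposition. This already shows that replacing $\mathbf{w}_1$ by $\mathbf{w}_{\parallel}$ preserves the objective value, preserves feasibility of the SINR constraint, and cannot worsen the power constraint. Hence there always exists an optimal solution lying in $\operatorname{span}\{\mathbf{a},\mathbf{h}_1\}$.

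To upgrade this to the stronger statement that \emph{every} optimizer satisfies (26), I would argue by contradiction. Suppose an optimal $\mathbf{w}_1^\star$ had $\mathbf{w}_{\perp}^\star\ne\mathbf{0}$. Then $\mathbf{w}_{\parallel}^\star$ uses strictly less power, so for some scaling factor $c>1$ the vector $c\,\mathbf{w}_{\parallel}^\star$ still satisfies $\|c\,\mathbf{w}_{\parallel}^\star\|^2\le P_T$, strictly improves the SINR, and strictly increases $|\mathbf{a}^H(c\,\mathbf{w}_{\parallel}^\star)|^2=c^2|\mathbf{a}^H\mathbf{w}_1^\star|^2$, contradicting optimality (the degenerate sub-case $\mathbf{a}^H\mathbf{w}_1^\star=0$ makes the CRB infinite and is clearly not optimal whenever problem (25) is feasible, which can be verified separately by exhibiting any feasible beamformer with nonzero gain toward $\mathbf{a}$). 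Therefore $\mathbf{w}_{\perp}^\star=\mathbf{0}$ and $\mathbf{w}_1^\star\in\operatorname{span}\{\mathbf{a},\mathbf{h}_1\}$.

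The argument is almost mechanical; the only mildly delicate point is ruling out the degenerate case $\mathbf{a}^H\mathbf{w}_1^\star=0$ when asserting that scaling strictly improves the objective, but this is easily handled since such a $\mathbf{w}_1^\star$ yields infinite CRB and hence cannot be optimal.
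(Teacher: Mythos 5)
Your proof is correct and follows essentially the same route as the paper's Appendix~A: decompose $\mathbf{w}_1$ into its component in $\operatorname{span}\{\mathbf{a},\mathbf{h}_1\}$ and an orthogonal remainder, note that only the in-span part contributes to the objective $|\mathbf{a}^H\mathbf{w}_1|^2$ and the SINR numerator while the remainder merely consumes power, and conclude by reallocating that wasted power. Your explicit handling of the degenerate sub-case $\mathbf{a}^H\mathbf{w}_1^\star=0$ is in fact slightly more careful than the paper, which asserts a strict increase without comment.
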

\renewcommand{\qedsymbol}{$\blacksquare$}
\begin{proof}
See Appendix A.
\end{proof}
Using Lemma 1, the optimal solution is given by the following theorem.
\begin{thm}
  The optimal solution to (\ref{eq25}) is
  \begin{equation}\label{eq27}
  {{\mathbf{w}}_1} = \left\{ \begin{gathered}
  \sqrt {{P_T}} \frac{{\mathbf{a}}}{{\left\| {\mathbf{a}} \right\|}},\;\;{\text{if}}\;{P_T}{\left| {{\mathbf{h}}_1^H{\mathbf{a}}} \right|^2} > {N_t}\Gamma_1 \sigma _C^2, \hfill \\
  {x_1}{{\mathbf{u}}_1} + {x_2}{{\mathbf{a}}_u},\;{\text{otherwise}}, \hfill \\
\end{gathered}  \right.
  \end{equation}
  where
  \begin{equation}\label{eq28}
    {{\mathbf{u}}_1} = \frac{{{{\mathbf{h}}_1}}}{{\left\| {{{\mathbf{h}}_1}} \right\|}},\quad {{\mathbf{a}}_u} = \frac{{{\mathbf{a}} - \left( {{\mathbf{u}}_1^H{\mathbf{a}}} \right){{\mathbf{u}}_1}}}{{\left\| {{\mathbf{a}} - \left( {{\mathbf{u}}_1^H{\mathbf{a}}} \right){{\mathbf{u}}_1}} \right\|}},
  \end{equation}
  \begin{equation}\label{eq29}
    {x_1} = \sqrt {\frac{{{\Gamma _1}\sigma _C^2}}{{{{\left\| {{{\mathbf{h}}_1}} \right\|}^2}}}} \frac{{{\mathbf{u}}_1^H{\mathbf{a}}}}{{\left| {{\mathbf{u}}_1^H{\mathbf{a}}} \right|}},\quad {x_2} = \sqrt {{P_T} - \frac{{{\Gamma _1}\sigma _C^2}}{{{{\left\| {{{\mathbf{h}}_1}} \right\|}^2}}}} \frac{{{\mathbf{a}}_u^H{\mathbf{a}}}}{{\left| {{\mathbf{a}}_u^H{\mathbf{a}}} \right|}}.
  \end{equation}
\end{thm}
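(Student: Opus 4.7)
The plan is to exploit Lemma~1 to reduce the problem to a two-variable optimization in coefficients $(x_1,x_2)$ along the basis $\{\mathbf{u}_1,\mathbf{a}_u\}$ of $\operatorname{span}\{\mathbf{h}_1,\mathbf{a}\}$, then solve that reduced problem by a short case analysis based on whether the unconstrained maximizer already satisfies the SINR constraint.

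First, I would substitute $\mathbf{w}_1 = x_1 \mathbf{u}_1 + x_2 \mathbf{a}_u$ (which is justified by Lemma~1) into the objective and the two constraints of (\ref{eq25}). Since $\mathbf{u}_1 \perp \mathbf{a}_u$, $\|\mathbf{u}_1\|=\|\mathbf{a}_u\|=1$, $\mathbf{h}_1^H \mathbf{u}_1 = \|\mathbf{h}_1\|$, and $\mathbf{h}_1^H \mathbf{a}_u = 0$ (by Gram--Schmidt), the reduced problem becomes
\begin{equation*}
\max_{x_1,x_2}\; \bigl|x_1(\mathbf{a}^H\mathbf{u}_1)+x_2(\mathbf{a}^H\mathbf{a}_u)\bigr|^2 \ \text{s.t.}\ |x_1|^2\|\mathbf{h}_1\|^2\ge \Gamma_1\sigma_C^2,\ |x_1|^2+|x_2|^2\le P_T.
\end{equation*}

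Next, I would argue optimality of phase alignment: by the triangle inequality, the objective is upper bounded by $\bigl(|x_1|\,|\mathbf{a}^H\mathbf{u}_1|+|x_2|\,|\mathbf{a}^H\mathbf{a}_u|\bigr)^2$, with equality iff the two scalars $x_1(\mathbf{a}^H\mathbf{u}_1)$ and $x_2(\mathbf{a}^H\mathbf{a}_u)$ have the same phase. Choosing $x_1=|x_1|\,(\mathbf{u}_1^H\mathbf{a})/|\mathbf{u}_1^H\mathbf{a}|$ and $x_2=|x_2|\,(\mathbf{a}_u^H\mathbf{a})/|\mathbf{a}_u^H\mathbf{a}|$ achieves this alignment, reducing the problem to an optimization over the nonnegative magnitudes $r_1=|x_1|$ and $r_2=|x_2|$ with the objective being monotonically increasing in both.

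It then remains to split into two cases. If the unconstrained optimum $\mathbf{w}_1=\sqrt{P_T}\,\mathbf{a}/\|\mathbf{a}\|$ of the SINR-free problem is feasible, i.e.\ $P_T|\mathbf{h}_1^H\mathbf{a}|^2/N_t\ge \Gamma_1\sigma_C^2$ (using $\|\mathbf{a}\|^2=N_t$), it is clearly optimal for (\ref{eq25}); this handles the first branch of (\ref{eq27}). Otherwise, the SINR constraint must be active (a slack SINR constraint would let us scale $\mathbf{w}_1$ closer to the unconstrained maximizer without violating feasibility, contradicting suboptimality of the scaled iterate under the case hypothesis); setting $r_1^2=\Gamma_1\sigma_C^2/\|\mathbf{h}_1\|^2$ saturates the SINR constraint, and monotonicity in $r_2$ forces the power budget to also be saturated, so $r_2^2=P_T-\Gamma_1\sigma_C^2/\|\mathbf{h}_1\|^2$. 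Combined with the phase alignment above, this yields exactly the $x_1, x_2$ in (\ref{eq29}).

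The main obstacle is the careful justification of the two activeness claims in the second case: one must show that the SINR constraint binds (which relies on the case hypothesis ruling out the unconstrained optimum) and that the power constraint binds given that the SINR constraint has already fixed $|x_1|$ (which relies on the objective being strictly increasing in $|x_2|$ after phase alignment, and in turn on $\mathbf{a}^H\mathbf{a}_u\ne 0$, i.e.\ $\mathbf{a}$ not being parallel to $\mathbf{h}_1$; the degenerate parallel case can be handled separately and trivially).
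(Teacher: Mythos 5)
Your proposal is correct and follows essentially the same route as the paper's Appendix B: reduce via Lemma~1 to the two-coefficient problem in the $\{\mathbf{u}_1,\mathbf{a}_u\}$ basis, align phases with $\mathbf{u}_1^H\mathbf{a}$ and $\mathbf{a}_u^H\mathbf{a}$, and split on whether the SINR constraint is active, with both constraints saturated in the active case. Your justification of the two activeness claims (and your remark on the degenerate $\mathbf{a}\parallel\mathbf{h}_1$ case) is if anything slightly more explicit than the paper's scaling argument, but it is the same proof in substance.
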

\begin{proof}
See Appendix B.
\end{proof}
\subsection{Semidefinite Relaxation for the Multi-User Case}
By taking a closer look at (\ref{eq5}), we see that $\operatorname{CRB} \left( \theta  \right)$ as an objective function is non-convex in $\mathbf{R}_X$ due to its fractional structure. Fortunately, it can be equivalently transformed into a convex expression with respect to $\mathbf{R}_X$ by relying on the following proposition.
\begin{proposition} \label{prop1}
  Minimizing $\operatorname{CRB} \left( \theta  \right)$ is equivalent to solving the following SDP
  \begin{equation}\label{eq30}
    \begin{gathered}
    \mathop {\min }\limits_{{{\mathbf{R}}_X}\succeq {\mathbf{0}},t} \; - t \hfill \\
    \text{\text{s.t.}}\;\;\left[ {\begin{array}{*{20}{c}}
    {\operatorname{tr} \left( {{{{\mathbf{\dot A}}}^H}{\mathbf{\dot A}}{{\mathbf{R}}_X}} \right) - t}&{\operatorname{tr} \left( {{{{\mathbf{\dot A}}}^H}{\mathbf{A}}{{\mathbf{R}}_X}} \right)} \\
    {\operatorname{tr} \left( {{{\mathbf{A}}^H}{\mathbf{\dot A}}{{\mathbf{R}}_X}} \right)}&{\operatorname{tr} \left( {{{\mathbf{A}}^H}{\mathbf{A}}{{\mathbf{R}}_X}} \right)}
    \end{array}} \right] \succeq {\mathbf{0}}. \hfill \\
    \end{gathered}
  \end{equation}
\end{proposition}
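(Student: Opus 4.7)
The plan is to absorb the three traces appearing in (\ref{eq5}) into scalar shorthands, rewrite $\operatorname{CRB}(\theta)$ as a simple ratio, and then apply the Schur complement lemma to a $2\times 2$ Hermitian block matrix to remove the fractional form at the cost of one auxiliary variable and an LMI constraint.

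Concretely, I would set $F_{11} = \operatorname{tr}(\dot{\mathbf{A}}^H \dot{\mathbf{A}} \mathbf{R}_X)$, $F_{22} = \operatorname{tr}(\mathbf{A}^H \mathbf{A} \mathbf{R}_X)$ and $F_{12} = \operatorname{tr}(\dot{\mathbf{A}}^H \mathbf{A} \mathbf{R}_X)$, each of which is linear in $\mathbf{R}_X$ and, up to scaling, coincides with an entry of the Fisher information matrix for $(\theta,\alpha)$. Substituting into (\ref{eq5}) gives $\operatorname{CRB}(\theta) \propto F_{22}/(F_{11} F_{22} - |F_{12}|^2)$, which I would rewrite as $[\sigma_R^2/(2|\alpha|^2 L)]\cdot(F_{11} - |F_{12}|^2/F_{22})^{-1}$. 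Minimizing the CRB is therefore equivalent to maximizing $F_{11} - |F_{12}|^2/F_{22}$, and by introducing a slack variable $t$ this becomes $\max t$ subject to $t \le F_{11} - |F_{12}|^2/F_{22}$.

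The crux is the Schur complement lemma: provided $F_{22} > 0$,
\begin{equation*}
\begin{bmatrix} F_{11} - t & F_{12} \\ F_{12}^* & F_{22} \end{bmatrix} \succeq \mathbf{0} \ \Longleftrightarrow \ F_{11} - t - \frac{|F_{12}|^2}{F_{22}} \ge 0.
\end{equation*}
Since the $F_{ij}$ are linear in $\mathbf{R}_X$ and the constraint $\mathbf{R}_X \succeq \mathbf{0}$ is convex, the reformulated problem is a genuine SDP with objective $-t$, which is exactly (\ref{eq30}).

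The one subtlety, and thus the main obstacle, is the degenerate case $F_{22} = 0$, where the Schur complement statement additionally requires $F_{12} = 0$. Physically, $F_{22} = 0$ means that $\mathbf{A}\mathbf{R}_X^{1/2} = \mathbf{0}$, i.e., no energy is radiated toward the angle of interest, so the original CRB is infinite and such points are never optimal in the presence of any meaningful estimation goal. This lets me restrict attention to $F_{22} > 0$ without loss of generality and conclude that the two problems share the same optimal value and the same set of optimal $\mathbf{R}_X$, which establishes the claimed equivalence.
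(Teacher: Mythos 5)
Your proposal is correct and follows essentially the same route as the paper's Appendix C: rewrite the CRB minimization as maximizing the scalar Schur complement $F_{11}-|F_{12}|^2/F_{22}$ via a slack variable $t$, then apply the Schur complement lemma to obtain the $2\times 2$ LMI. Your additional treatment of the degenerate case $F_{22}=0$ is a small extra point of rigor the paper omits, but it does not change the argument.
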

\renewcommand{\qedsymbol}{$\blacksquare$}
\begin{proof}
See Appendix C.
\end{proof}
Based on Proposition 1, and by noting that ${{\mathbf{R}}_X} = {\mathbf{W}}_D{{{\mathbf{W}}}_D^H} = \sum\nolimits_{k = 1}^{{K}} {{{\mathbf{w}}_k}{\mathbf{w}}_k^H}$, problem (\ref{eq19}) can be rewritten as
\begin{equation}\label{eq31}
\begin{gathered}
  \mathop {\min }\limits_{\left\{ {{{\mathbf{w}}_k}} \right\}_{k = 1}^K,{{\mathbf{R}}_X},t} \; - t \hfill \\
  {\text{s.t.}}\;\left[ {\begin{array}{*{20}{c}}
  {\operatorname{tr} \left( {{{{\mathbf{\dot A}}}^H}{\mathbf{\dot A}}{{\mathbf{R}}_X}} \right) - t}&{\operatorname{tr} \left( {{{{\mathbf{\dot A}}}^H}{\mathbf{A}}{{\mathbf{R}}_X}} \right)} \\
  {\operatorname{tr} \left( {{{\mathbf{A}}^H}{\mathbf{\dot A}}{{\mathbf{R}}_X}} \right)}&{\operatorname{tr} \left( {{{\mathbf{A}}^H}{\mathbf{A}}{{\mathbf{R}}_X}} \right)}
\end{array}} \right] \succeq {\mathbf{0}}, \hfill \\
  \;\;\;\;\;\;\frac{{{{\left| {{\mathbf{h}}_k^H{{\mathbf{w}}_k}} \right|}^2}}}{{\sum\nolimits_{i = 1,i \ne k}^K {{{\left| {{\mathbf{h}}_k^H{{\mathbf{w}}_i}} \right|}^2} +\sigma _C^2} }} \ge {\Gamma _k},\forall k, \hfill \\
  \;\;\;\;\;\;\sum\nolimits_{k = 1}^{K} {\operatorname{tr} \left( {{{\mathbf{w}}_k}{\mathbf{w}}_k^H} \right)} \le {P_T},\quad {{\mathbf{R}}_X} = \sum\nolimits_{k = 1}^{{K}} {{{\mathbf{w}}_k}{\mathbf{w}}_k^H}. \hfill \\
\end{gathered}
\end{equation}
While problem (\ref{eq31}) is still non-convex, it can be relaxed into a convex problem by using the classical SDR technique. Upon letting ${{\mathbf{Q}}_k} = {\mathbf{h}}_k{\mathbf{h}}_k^H, {{\mathbf{W}}_k} = {{\mathbf{w}}_k}{\mathbf{w}}_k^H$, the $k$-th SINR constraint can be reformulated as
\begin{equation}\label{eq33}
\operatorname{tr} \left( {{{\mathbf{Q}}_k}{{\mathbf{W}}_k}} \right) - {\Gamma _k}\sum\nolimits_{i = 1,i \ne k}^{K} {\operatorname{tr} \left( {{{\mathbf{Q}}_k}{{\mathbf{W}}_i}} \right)} \ge {\Gamma _k}\sigma _C^2.
\end{equation}
Note that the desired solution requires $\operatorname{rank}\left(\mathbf{W}_k\right) = 1$, and $\mathbf{W}_k \succeq \mathbf{0}$. By droping the rank constraints on $\mathbf{W}_k,\forall k$, problem (\ref{eq31}) can be relaxed as
\begin{equation}\label{eq34}
\begin{gathered}
  \mathop {\min }\limits_{\left\{ {{{\mathbf{W}}_k}} \right\}_{k = 1}^{K},t} \; - t \hfill \\
  \text{\text{s.t.}}\left[ {\begin{array}{*{20}{c}}
  {\operatorname{tr} \left( {{{{\mathbf{\dot A}}}^H}{\mathbf{\dot A}}\sum\limits_{k = 1}^K {{{\mathbf{W}}_k}} } \right) - t}&{\operatorname{tr} \left( {{{{\mathbf{\dot A}}}^H}{\mathbf{A}}\sum\limits_{k = 1}^K {{{\mathbf{W}}_k}} } \right)} \\
  {\operatorname{tr} \left( {{{\mathbf{A}}^H}{\mathbf{\dot A}}\sum\limits_{k = 1}^K {{{\mathbf{W}}_k}} } \right)}&{\operatorname{tr} \left( {{{\mathbf{A}}^H}{\mathbf{A}}\sum\limits_{k = 1}^K {{{\mathbf{W}}_k}} } \right)}
\end{array}} \right] \succeq {\mathbf{0}},\; \hfill \\
  \;\;\;\;\;\;\operatorname{tr} \left( {{{\mathbf{Q}}_k}{{\mathbf{W}}_k}} \right) - {\Gamma _k}\sum\nolimits_{i = 1,i \ne k}^{K} {\operatorname{tr} \left( {{{\mathbf{Q}}_k}{{\mathbf{W}}_i}} \right)} \ge {\Gamma _k}\sigma _C^2,\forall k, \hfill \\
  \;\;\;\;\;\;\sum\nolimits_{k = 1}^{K} {\operatorname{tr} \left( {{{\mathbf{W}}_k}} \right)}  \le {P_T},{{\mathbf{W}}_k} \succeq {\mathbf{0}},\forall k, \hfill \\
\end{gathered}
\end{equation}
which is a standard SDP and can be solved via numerical tools like CVX \cite{grant2008cvx}. To show the achievability of rank-one optimal solutions, we provide further insights into problem (\ref{eq34}) by proving the following theorem.
\begin{thm}
  Suppose that problem (\ref{eq34}) is feasible. Let ${\mathbf{\bar A}} = \left[ {{\mathbf{a}},{\mathbf{\dot a}}} \right]$. Under the condition that ${\mathbf{H}}{\mathbf{\bar A}}$ is of full column rank, the optimal solution $\left\{ {{{{\mathbf{W}}}_k}} \right\}_{k = 1}^{{K}}$ always satisfies $\operatorname{rank}\left({{{\mathbf{W}}}_k}\right) = 1, \forall k$.
\end{thm}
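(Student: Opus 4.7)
Since (\ref{eq34}) is a convex SDP that is feasible by hypothesis (Slater's condition can be verified by a small perturbation of the power and SINR levels), strong duality holds and the KKT system characterizes its optima. My plan is to extract from stationarity an explicit formula for the dual matrix $\mathbf{Z}_k$ attached to $\mathbf{W}_k\succeq\mathbf{0}$, show $\operatorname{rank}(\mathbf{Z}_k)\ge N_t-1$, and then invoke complementary slackness $\mathbf{Z}_k\mathbf{W}_k=\mathbf{0}$ and the SINR inequality to deduce $\operatorname{rank}(\mathbf{W}_k)=1$.

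I would dualize the constraints with a $2\times 2$ PSD matrix $\mathbf{F}$ for the LMI, scalars $\mu_k\ge 0$ and $\lambda\ge 0$ for the SINR and power constraints, and matrices $\mathbf{Z}_k\succeq\mathbf{0}$ for the semidefinite cones. Stationarity in $t$ yields $F_{11}=1$. Stationarity in each $\mathbf{W}_k$, after gathering terms, takes the form
\begin{equation*}
\mathbf{Z}_k=\boldsymbol{\Phi}-\mu_k(1+\Gamma_k)\mathbf{h}_k\mathbf{h}_k^H,\quad \boldsymbol{\Phi}\triangleq\lambda\mathbf{I}_{N_t}-\mathbf{M}+\sum\nolimits_{j=1}^{K}\mu_j\Gamma_j\mathbf{h}_j\mathbf{h}_j^H,
\end{equation*}
where $\mathbf{M}=F_{11}\dot{\mathbf{A}}^H\dot{\mathbf{A}}+F_{22}\mathbf{A}^H\mathbf{A}+F_{12}^{*}\dot{\mathbf{A}}^H\mathbf{A}+F_{12}\mathbf{A}^H\dot{\mathbf{A}}$. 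Substituting $\mathbf{A}=\mathbf{b}\mathbf{a}^H$, $\dot{\mathbf{A}}=\dot{\mathbf{b}}\mathbf{a}^H+\mathbf{b}\dot{\mathbf{a}}^H$ and exploiting (\ref{eq22}), one checks that $\mathbf{M}$ factors as $\bar{\mathbf{A}}\mathbf{K}\bar{\mathbf{A}}^H$ for some $2\times 2$ Hermitian PSD $\mathbf{K}$. The identity $\boldsymbol{\Phi}=\mathbf{Z}_k+\mu_k(1+\Gamma_k)\mathbf{h}_k\mathbf{h}_k^H$ is a sum of PSDs, so $\boldsymbol{\Phi}\succeq\mathbf{0}$ comes for free.

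The crux is upgrading $\boldsymbol{\Phi}\succeq\mathbf{0}$ to $\boldsymbol{\Phi}\succ\mathbf{0}$ using the full column rank of $\mathbf{H}\bar{\mathbf{A}}$. I would argue by contradiction: suppose $\boldsymbol{\Phi}\mathbf{x}=\mathbf{0}$ with $\mathbf{x}\neq\mathbf{0}$. Then $\mathbf{x}^H\mathbf{Z}_k\mathbf{x}=-\mu_k(1+\Gamma_k)|\mathbf{h}_k^H\mathbf{x}|^2\ge 0$ forces $\mu_k\mathbf{h}_k^H\mathbf{x}=0$ for every $k$, so that $\boldsymbol{\Phi}\mathbf{x}=\mathbf{0}$ collapses to $\lambda\mathbf{x}=\bar{\mathbf{A}}\mathbf{K}\bar{\mathbf{A}}^H\mathbf{x}$. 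Provided $\lambda>0$ (which I would justify by noting that the power constraint is active at optimum, since enlarging $\mathbf{R}_X$ cannot increase $\operatorname{CRB}(\theta)$), this places $\mathbf{x}=\bar{\mathbf{A}}\mathbf{y}$ for some nonzero $\mathbf{y}\in\mathbb{C}^2$. The final step requires $\mu_k>0$ for every $k$: if some $\mu_k=0$ then $\mathbf{Z}_k=\boldsymbol{\Phi}$ and complementary slackness forces $\operatorname{range}(\mathbf{W}_k)\subseteq\operatorname{null}(\boldsymbol{\Phi})$, which combined with $\mathbf{h}_k^H\mathbf{W}_k\mathbf{h}_k\ge\Gamma_k\sigma_C^2>0$ and the full column rank hypothesis applied to the remaining rows yields a contradiction. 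With $\mu_k>0$ for all $k$, the relations $\mathbf{h}_k^H\mathbf{x}=0$ rewrite as $\mathbf{H}\bar{\mathbf{A}}\mathbf{y}=\mathbf{0}$, and the full column rank hypothesis gives $\mathbf{y}=\mathbf{0}$, contradicting $\mathbf{x}\ne\mathbf{0}$.

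Once $\boldsymbol{\Phi}\succ\mathbf{0}$, $\mathbf{Z}_k$ is a rank-one perturbation of a PD matrix, so $\operatorname{rank}(\mathbf{Z}_k)\ge N_t-1$. Moreover $\mathbf{Z}_k$ cannot be strictly PD, for otherwise $\mathbf{W}_k=\mathbf{0}$ and the SINR constraint fails; hence $\operatorname{rank}(\mathbf{Z}_k)=N_t-1$, which forces $\operatorname{rank}(\mathbf{W}_k)\le 1$, and in fact equality since $\mathbf{W}_k\neq\mathbf{0}$. The main obstacle is clearly the positive-definiteness step for $\boldsymbol{\Phi}$: the auxiliary facts $\lambda>0$ and $\mu_k>0,\,\forall k$ are entangled with precisely the conclusion one wants to draw from the structure of the duals, and care is needed so that the full column rank hypothesis on $\mathbf{H}\bar{\mathbf{A}}$ (as opposed to on some row-submatrix) is what actually closes the argument.
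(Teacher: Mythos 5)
Your dual setup, the stationarity identity $\mathbf{Z}_k=\boldsymbol{\Phi}-\mu_k(1+\Gamma_k)\mathbf{h}_k\mathbf{h}_k^H$ with $\boldsymbol{\Phi}=\lambda\mathbf{I}_{N_t}-\bar{\mathbf{A}}\mathbf{K}\bar{\mathbf{A}}^H+\sum_j\mu_j\Gamma_j\mathbf{h}_j\mathbf{h}_j^H$, and the factorization of the LMI-induced term through $\bar{\mathbf{A}}$ all match the paper. But the central step of your plan --- upgrading $\boldsymbol{\Phi}\succeq\mathbf{0}$ to $\boldsymbol{\Phi}\succ\mathbf{0}$ unconditionally --- is false, and the sub-argument you sketch to rule out $\mu_k=0$ does not close. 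It is entirely possible at the optimum that \emph{all} SINR constraints are slack (small $\Gamma_k$), so that $\mu_k=0$ for every $k$. Then $\mathbf{Z}_k=\boldsymbol{\Phi}=\lambda\mathbf{I}-\mathbf{M}$ for all $k$, and since each $\mathbf{Z}_k$ must be singular to permit $\mathbf{W}_k\neq\mathbf{0}$ (complementary slackness), $\boldsymbol{\Phi}$ is necessarily singular. Your contradiction never materializes: with all $\mu_j=0$ the relations $\mu_j\mathbf{h}_j^H\mathbf{x}=0$ are vacuous, the full-column-rank hypothesis on $\mathbf{H}\bar{\mathbf{A}}$ is never invoked, and $\mathbf{h}_k^H\mathbf{W}_k\mathbf{h}_k>0$ is perfectly consistent with $\operatorname{range}(\mathbf{W}_k)\subseteq\operatorname{null}(\boldsymbol{\Phi})$ as long as $\mathbf{h}_k$ is not orthogonal to that null space, which is the generic situation. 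In this regime the rank-one conclusion still holds, but for a reason your proof never supplies: one must show that $\lambda_{\max}(\mathbf{M})$ is a \emph{simple} eigenvalue, which the paper does by explicitly computing the two nonzero eigenvalues of the rank-two matrix $\bar{\mathbf{A}}\mathbf{K}\bar{\mathbf{A}}^H$ and using $N_t\neq N_r$ to separate them.

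The paper therefore splits on $|\mathcal{K}_1|$ with $\mathcal{K}_1=\{k:\mu_k>0\}$. Your argument is essentially the paper's Lemma 3 and Corollary 1, which handle only the case $|\mathcal{K}_1|\ge 2$: there, the rows of $\mathbf{H}\bar{\mathbf{A}}$ indexed by $\mathcal{K}_1$ suffice to force $\boldsymbol{\Phi}\succ\mathbf{0}$ (a $1\times 2$ or $0\times 2$ submatrix can never have column rank $2$, which is exactly why at least two active multipliers are needed), and then $\mathbf{Z}_k$ is a positive-definite matrix minus a rank-one term, giving $\operatorname{rank}(\mathbf{W}_k)=1$. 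You are missing the case $|\mathcal{K}_1|=1$ entirely, and it is the genuinely delicate one: there $\mathbf{Z}_1$ can have rank $N_t-2$, the SDR can return a rank-two $\mathbf{W}_1$ supported on $\operatorname{span}\{\mathbf{h}_1,\mathbf{f}_{\max}\}$, and no eigenvalue or rank count will rescue you. The paper resolves it by a purification step: it moves the $\mathbf{f}_{\max}\mathbf{f}_{\max}^H$ component of $\mathbf{W}_1$ onto another user's covariance, preserving $\mathbf{R}_X$, the objective, and all SINR constraints, thereby \emph{constructing} a rank-one optimal solution rather than proving every optimal solution is rank one. Without the case split, the eigenvalue-simplicity argument, and the purification device, your proof covers only one of the three regimes.
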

\begin{proof}
  See Appendix D.
\end{proof}
\indent\emph{Remark:} Note that as we consider a random channel $\mathbf{H}$ whose entries are independently distributed, the condition that ${\mathbf{H}}{\mathbf{\bar A}}$ is of full column rank almost always holds for $K \ge 2$. Therefore, solving (\ref{eq34}) yields rank-one solutions in general, i.e., the globally optimal solutions of problem (\ref{eq31}). In other words, if the communication channel and the radar target channel are not correlated with each other, the optimal beamformer can always be obtained by solving the SDR problem (\ref{eq34}).

\section{Joint Beamforming Design for Extended Target}
\subsection{Problem Formulation}
Based on the discussion in Sec. II-B, as well as (\ref{eq16}) and (\ref{eq17}), the beamforming optimization problem for the extended target scenario can be expressed as
\begin{equation}\label{eq35}
\begin{gathered}
  \mathop {\min }\limits_{{\mathbf{\tilde W}}_D} \;\operatorname{CRB}\left(\mathbf{G}\right) = \operatorname{tr} \left( {{{\left( {{{\mathbf{W}}_C}{\mathbf{W}}_C^H + {{\mathbf{W}}_A}{\mathbf{W}}_A^H} \right)}^{ - 1}}} \right) \hfill \\
  \text{\text{s.t.}}\;\;{\tilde\gamma}_k \ge {\Gamma _k},k = 1,\ldots,K, 
  \;\;\;\;\;\;\left\| {{\mathbf{\tilde W}}_D} \right\|_F^2 \le {P_T},\; \hfill \\
\end{gathered}
\end{equation}
By letting
\begin{equation}\label{eq36}
  {{\mathbf{W}}_k} = {{\mathbf{w}}_k}{\mathbf{w}}_k^H,\forall k \le K,\quad {{\mathbf{W}}_{K + 1}} = {{\mathbf{W}}_A}{\mathbf{W}}_A^H,
\end{equation}
and following similar steps of its point target counterpart, (\ref{eq35}) can be relaxed to the following convex form
\begin{equation}\label{eq37}
\begin{gathered}
  \mathop {\min }\limits_{\left\{ {{{\mathbf{W}}_k}} \right\}_{k = 1}^{K + 1}} \;\;\operatorname{tr} \left( {{{\left( {\sum\nolimits_{k = 1}^{K + 1} {{{\mathbf{W}}_k}} } \right)}^{ - 1}}} \right) \hfill \\
  \;\;\;\;{\text{s.t.}}\;\;\operatorname{tr} \left( {{{\mathbf{Q}}_k}{{\mathbf{W}}_k}} \right) - {\Gamma _k}\sum\nolimits_{i = 1,i \ne k}^{K+1} {\operatorname{tr} \left( {{{\mathbf{Q}}_k}{{\mathbf{W}}_i}} \right)} \ge {\Gamma _k}\sigma _C^2,\forall k, \hfill \\
  \;\;\;\;\;\;\;\;\;\;\;\sum\nolimits_{k = 1}^{K + 1} {\operatorname{tr} \left( {{{\mathbf{W}}_k}} \right)}  \le {P_T},{{\mathbf{W}}_k} \succeq {\mathbf{0}},\forall k. \hfill \\
\end{gathered}
\end{equation}
Next, we show that problem (\ref{eq37}) can be solved in closed form when there is only a single communication user to be served.

\subsection{Single-User Case}
Let us denote the SINR threshold, the covariance matrices of the channel vector, and the beamformer for the user as $\Gamma_1$, $\mathbf{Q}_1 = {\mathbf{h}}_1{\mathbf{h}}_1^H$, and $\mathbf{W}_1 = {\mathbf{w}}_1{\mathbf{w}}_1^H$, respectively. The optimization problem (\ref{eq37}) can be recast as
\begin{equation}\label{eq38}
\begin{gathered}
  \mathop {\min }\limits_{{{\mathbf{W}}_1},{{\mathbf{R}}_X}} \;\operatorname{tr} \left( {{\mathbf{R}}_X^{ - 1}} \right) \hfill \\
  \;\;\;\text{\text{s.t.}}\;\;\operatorname{tr} \left( {{{\mathbf{Q}}_1}{{\mathbf{W}}_1}} \right) - {\Gamma _1}\operatorname{tr} \left( {{{\mathbf{Q}}_1}\left( {{{\mathbf{R}}_X} - {{\mathbf{W}}_1}} \right)} \right) \ge {\Gamma _1}\sigma _C^2, \hfill \\
  \;\;\;\;\;\;\;\;\;\;\operatorname{tr} \left( {{{\mathbf{R}}_X}} \right) \le {P_T}, {{\mathbf{R}}_X} \succeq {{\mathbf{W}}_1} \succeq {\mathbf{0}}. \hfill \\
\end{gathered}
\end{equation}
In this case we have ${{\mathbf{R}}_X} = \mathbf{W}_1 + \mathbf{W}_2$, where $\mathbf{W}_2 = {{\mathbf{W}}_A}{\mathbf{W}}_A^H$. Let us compute the eigenvalue decomposition of ${{\mathbf{Q}}_1}$ as ${{\mathbf{Q}}_1} = {{\mathbf{U}}}{{\mathbf{\Sigma }}}{\mathbf{U}}^H$, where ${\mathbf{U}} = \left[ {{{\mathbf{u}}_1},{{\mathbf{u}}_2}, \ldots ,{{\mathbf{u}}_{{N_t}}}} \right]$ contains the eigenvectors of $\mathbf{Q}_1$ with ${{\mathbf{u}}_1} = \frac{{{\mathbf{h}}_1}}{{\left\| {{{\mathbf{h}}_1}} \right\|}}$, and $\mathbf{\Sigma }$ is a diagonal matrix with only a single non-zero eigenvalue ${\left\| {{{\mathbf{h}}_1}} \right\|^2}$.
\begin{lemma}
  The optimal ${{{\mathbf{R}}_X}}$ of (\ref{eq38}) can be written as
  \begin{equation}\label{eq39}
    {{\mathbf{R}}_X} = {{\mathbf{U}}}{\mathbf{\Lambda}}{\mathbf{U}}^H,
  \end{equation}
  where $\mathbf{\Lambda}$ is a full-rank real diagonal matrix. In other words, ${{\mathbf{u}}_1},\dots,{{\mathbf{u}}_{N_t}}$ are also the eigenvectors of the optimal ${{{\mathbf{R}}_X}}$.
\end{lemma}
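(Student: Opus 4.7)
The plan is to establish the lemma by a direct ``block-diagonalization'' argument that exploits the rank-one structure of $\mathbf{Q}_1 = \|\mathbf{h}_1\|^2\,\mathbf{u}_1\mathbf{u}_1^H$. Every constraint of (\ref{eq38}) involving $\mathbf{Q}_1$ depends on $\mathbf{R}_X$ and $\mathbf{W}_1$ only through the scalars $a := \mathbf{u}_1^H \mathbf{R}_X \mathbf{u}_1$ and $a' := \mathbf{u}_1^H \mathbf{W}_1 \mathbf{u}_1$, while the power constraint depends only on $\operatorname{tr}(\mathbf{R}_X)$. I would therefore take an arbitrary feasible pair $(\mathbf{R}_X,\mathbf{W}_1)$, partition
\[
\mathbf{U}^H \mathbf{R}_X \mathbf{U} = \begin{pmatrix} a & \mathbf{b}^H \\ \mathbf{b} & \mathbf{C} \end{pmatrix},\qquad \mathbf{U}^H \mathbf{W}_1 \mathbf{U} = \begin{pmatrix} a' & \mathbf{b}'^H \\ \mathbf{b}' & \mathbf{C}' \end{pmatrix}
\]
in the basis $\mathbf{U}=[\mathbf{u}_1,\mathbf{u}_2,\ldots,\mathbf{u}_{N_t}]$ (where $\{\mathbf{u}_2,\ldots,\mathbf{u}_{N_t}\}$ is initially an arbitrary orthonormal basis of the orthogonal complement of $\mathbf{u}_1$), and show that the block-diagonal ``truncations'' obtained by setting $\mathbf{b}=\mathbf{b}'=\mathbf{0}$ remain feasible without increasing the objective.

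Feasibility preservation is essentially bookkeeping: the SINR and power constraints are unchanged since they depend only on $a$, $a'$, and $a+\operatorname{tr}(\mathbf{C})$; and the block inequalities $a' \geq 0$, $\mathbf{C}' \succeq \mathbf{0}$, $a \geq a'$, $\mathbf{C} \succeq \mathbf{C}'$ all follow from $\mathbf{W}_1 \succeq \mathbf{0}$ and $\mathbf{R}_X \succeq \mathbf{W}_1$ by passing to principal submatrices, which is exactly what is needed for the truncated pair to satisfy $\mathbf{R}_X' \succeq \mathbf{W}_1' \succeq \mathbf{0}$. The crucial step, and the main obstacle, is the objective inequality $\operatorname{tr}(\mathbf{R}_X^{-1}) \geq \operatorname{tr}(\mathbf{R}_X'^{-1})$. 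I would establish it by applying the block-matrix inverse formula: writing $s = a - \mathbf{b}^H \mathbf{C}^{-1}\mathbf{b} > 0$ (positive because $\mathbf{R}_X\succ \mathbf{0}$ at any point of finite objective) gives
\[
\operatorname{tr}(\mathbf{R}_X^{-1}) - \operatorname{tr}(\mathbf{R}_X'^{-1}) = \frac{1 + \mathbf{b}^H \mathbf{C}^{-2} \mathbf{b}}{s} - \frac{1}{a} = \frac{a\,\mathbf{b}^H \mathbf{C}^{-2} \mathbf{b} + \mathbf{b}^H \mathbf{C}^{-1} \mathbf{b}}{a\,s} \geq 0,
\]
with equality iff $\mathbf{b}=\mathbf{0}$. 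This Schur-complement manipulation is the key technical content; the rest of the proof amounts to recognizing how the block structure interacts with the semidefinite ordering.

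Having reduced to the case where $\mathbf{R}_X'$ is block-diagonal in the basis $\mathbf{U}$, I am free to choose $\{\mathbf{u}_2,\ldots,\mathbf{u}_{N_t}\}$ to be an eigenbasis of the lower-right block $\mathbf{C}$. Since these vectors still lie in the orthogonal complement of $\mathbf{u}_1$, they remain valid eigenvectors of $\mathbf{Q}_1$ (with eigenvalue $0$). This renders $\mathbf{U}^H \mathbf{R}_X' \mathbf{U}$ a real diagonal matrix $\mathbf{\Lambda}$, yielding the desired factorization $\mathbf{R}_X' = \mathbf{U}\mathbf{\Lambda}\mathbf{U}^H$. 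Full-rankness of $\mathbf{\Lambda}$ is automatic: finiteness of the optimal objective forces $\mathbf{R}_X' \succ \mathbf{0}$, so $a>0$ and $\mathbf{C} \succ \mathbf{0}$, hence every diagonal entry of $\mathbf{\Lambda}$ is strictly positive.
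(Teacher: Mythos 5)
Your proof is correct, and it takes a different and in fact more complete route than the paper's. The paper argues in two steps: (i) given an optimal $\mathbf{R}_X$, replace $\mathbf{W}_1$ by the rank-one projection $\left(\mathbf{u}_1^H\mathbf{R}_X\mathbf{u}_1\right)\mathbf{u}_1\mathbf{u}_1^H$, which leaves the objective and power untouched while only improving the SINR; (ii) assert that the remainder $\mathbf{W}_2=\mathbf{R}_X-\mathbf{W}_1$ is orthogonal to $\mathbf{W}_1$ and hence supported on $\mathbf{u}_1^\perp$, which immediately gives $\mathbf{R}_X=\mathbf{U}\mathbf{\Lambda}\mathbf{U}^H$. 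Step (ii) is exactly the claim that $\mathbf{u}_1$ is an eigenvector of the optimal $\mathbf{R}_X$, and the paper offers no justification beyond ``it can be immediately observed.'' Your Schur-complement computation,
\begin{equation*}
\operatorname{tr}\left(\mathbf{R}_X^{-1}\right)-\operatorname{tr}\left(\mathbf{R}_X'^{-1}\right)=\frac{a\,\mathbf{b}^H\mathbf{C}^{-2}\mathbf{b}+\mathbf{b}^H\mathbf{C}^{-1}\mathbf{b}}{a\,s}\ge 0,
\end{equation*}
with equality iff $\mathbf{b}=\mathbf{0}$, is precisely the missing ingredient: it shows that any coupling between $\operatorname{span}\{\mathbf{u}_1\}$ and its orthogonal complement strictly increases $\operatorname{tr}\left(\mathbf{R}_X^{-1}\right)$ while the SINR, power, and semidefinite-ordering constraints depend only on the diagonal blocks (the latter being preserved under passage to principal submatrices of $\mathbf{W}_1$ and $\mathbf{R}_X-\mathbf{W}_1$). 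As a bonus, your argument establishes the stronger conclusion that \emph{every} optimal $\mathbf{R}_X$, not merely some constructed one, admits $\mathbf{u}_1$ as an eigenvector; the final step of rotating $\mathbf{u}_2,\dots,\mathbf{u}_{N_t}$ within $\mathbf{u}_1^\perp$ to diagonalize the lower-right block is legitimate because the zero-eigenvalue eigenbasis of the rank-one matrix $\mathbf{Q}_1$ is only determined up to such a rotation. The paper's route is shorter and more intuitive; yours is the one that actually closes the logical gap.
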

\begin{proof}
Suppose that (\ref{eq38}) is feasible. Given an optimal $\mathbf{R}_X$ that reaches the minimum, one can always construct an optimal $\mathbf{W}_1$ in the form of
\begin{equation}\label{eq40}
{{\mathbf{W}}_1} = \left( {{\mathbf{u}}_1^H{{\mathbf{R}}_X}{{\mathbf{u}}_1}} \right){{\mathbf{u}}_1}{\mathbf{u}}_1^H.
\end{equation}
This is because by (\ref{eq40}) we project ${{\mathbf{R}}_X}$ onto the direction of $\mathbf{h}_1$, in which case $\operatorname{tr} \left( {{{\mathbf{Q}}_1}{{\mathbf{W}}_1}} \right)$ is maximized, and the interference term is minimized to zero. Accordingly, the SINR is maximized, while the objective value and the transmit power remain unchanged. Therefore, (\ref{eq40}) is a solution to (\ref{eq38}).
\\\indent It can be immediately observed that the optimal $\mathbf{W}_2$ is orthogonal to $\mathbf{W}_1$. By further noting that $\operatorname{rank}\left(\mathbf{W}_2\right) \ge N_t - 1$, it should have the form of ${{\mathbf{W}}_2} = \sum\nolimits_{i = 2}^{{N_t}} {{\lambda _{ii}}} {{\mathbf{u}}_i}{\mathbf{u}}_i^H$, with $\lambda_{ii} > 0$. Hence we have
\begin{equation}\label{eq41}
  {\mathbf{R}_X} = {{\mathbf{W}}_1} + {{\mathbf{W}}_2} = \sum\nolimits_{i = 1}^{{N_t}} {{\lambda _{ii}}} {{\mathbf{u}}_i}{\mathbf{u}}_i^H = {\mathbf{U\Lambda }}{{\mathbf{U}}^H},
\end{equation}
where $\lambda_{ii} = {{\mathbf{u}}_i^H{{\mathbf{R}}_X}{{\mathbf{u}}_i}}$, and ${\mathbf{\Lambda }} = \operatorname{diag} \left( {{\lambda _{11}},{\lambda _{22}}, \ldots {\lambda _{{N_t}{N_t}}}} \right)$. This completes the proof.
\end{proof}
With Lemma 2 at hand, the optimal solution can be attained by the following theorem.
\begin{thm}
  The optimal solution of problem (\ref{eq38}) can be given as
  \begin{equation}\label{eq42}
   {{\mathbf{W}}_1} = \frac{{{P_T}}}{{{N_t}}}\frac{{{\mathbf{h}}_1{\mathbf{h}}_1^H}}{{{{\left\| {{{\mathbf{h}}_1}} \right\|}^2}}},\quad {{\mathbf{R}}_X} = \frac{{{P_T}}}{{{N_t}}}{{\mathbf{I}}_{{N_t}}},
  \end{equation}
  if ${\Gamma _1} < \frac{{{P_T}{{\left\| {{{\mathbf{h}}_{\mathbf{1}}}} \right\|}^2}}}{{{N_t}\sigma _C^2}}$, and
  \begin{equation}\label{eq43}
    {{\mathbf{W}}_1} = \frac{{{\Gamma _1}\sigma _C^2{\mathbf{h}}_1{\mathbf{h}}_1^H}}{{{{\left\| {{{\mathbf{h}}_1}} \right\|}^4}}},{{\mathbf{R}}_X} = \sum\nolimits_{i = 1}^{{N_t}} {{\lambda _{ii}}{{\mathbf{u}}_i}} {\mathbf{u}}_i^H,
  \end{equation}
  if $\frac{{{P_T}{{\left\| {{{\mathbf{h}}_{\mathbf{1}}}} \right\|}^2}}}{{{N_t}\sigma _C^2}} \le {\Gamma _1} \le \frac{{{P_T}{{\left\| {{{\mathbf{h}}_{\mathbf{1}}}} \right\|}^2}}}{{\sigma _C^2}}$, where $\lambda_{ii},\forall i$ are computed as
  \begin{equation}\label{eq44}
    {\lambda_{11}} = \frac{{{\Gamma _1}\sigma _C^2}}{{{{\left\| {{{\mathbf{h}}_1}} \right\|}^2}}}, {\lambda _{ii}} = \frac{{{P_T}{{\left\| {{{\mathbf{h}}_1}} \right\|}^2} - {\Gamma _1}\sigma _C^2}}{{{{\left\| {{{\mathbf{h}}_1}} \right\|}^2}\left( {{N_t} - 1} \right)}},i = 2,3, \ldots ,{N_t}.
  \end{equation}
\end{thm}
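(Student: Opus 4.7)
The plan is to exploit Lemma~2 to reduce the matrix optimization problem (\ref{eq38}) to a convex scalar problem in the eigenvalues $\lambda_{ii}$, and then solve it in closed form by a simple case split on whether the SINR constraint is active at the power-equalizing allocation.

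First, I would substitute the structural results from Lemma~2 into (\ref{eq38}). Write $\mathbf{R}_X=\mathbf{U}\mathbf{\Lambda}\mathbf{U}^H$ with $\mathbf{\Lambda}=\operatorname{diag}(\lambda_{11},\dots,\lambda_{N_tN_t})$ and $\mathbf{W}_1=\lambda_{11}\mathbf{u}_1\mathbf{u}_1^H$. Since $\mathbf{Q}_1=\|\mathbf{h}_1\|^2\mathbf{u}_1\mathbf{u}_1^H$ has its only nonzero eigenvalue aligned with $\mathbf{u}_1$, a direct computation gives $\operatorname{tr}(\mathbf{Q}_1\mathbf{W}_1)=\|\mathbf{h}_1\|^2\lambda_{11}$ and $\operatorname{tr}(\mathbf{Q}_1(\mathbf{R}_X-\mathbf{W}_1))=0$, so the SINR constraint collapses to the scalar inequality $\lambda_{11}\ge \Gamma_1\sigma_C^2/\|\mathbf{h}_1\|^2$. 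Similarly, the objective becomes $\operatorname{tr}(\mathbf{R}_X^{-1})=\sum_{i=1}^{N_t}1/\lambda_{ii}$ and the power budget becomes $\sum_i\lambda_{ii}\le P_T$.

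Next I would solve the resulting reduced problem
\begin{equation*}
\min_{\lambda_{ii}>0}\sum_{i=1}^{N_t}\frac{1}{\lambda_{ii}}\quad\text{s.t.}\quad \lambda_{11}\ge \frac{\Gamma_1\sigma_C^2}{\|\mathbf{h}_1\|^2},\;\sum_{i=1}^{N_t}\lambda_{ii}\le P_T.
\end{equation*}
The unconstrained version (only the power constraint) is solved by the AM--HM inequality: $\sum 1/\lambda_{ii}\ge N_t^2/\sum \lambda_{ii}\ge N_t^2/P_T$, attained uniquely at $\lambda_{ii}=P_T/N_t$ for all $i$. If this equal allocation already satisfies the SINR constraint, i.e.\ $P_T/N_t\ge \Gamma_1\sigma_C^2/\|\mathbf{h}_1\|^2$ (equivalently $\Gamma_1<P_T\|\mathbf{h}_1\|^2/(N_t\sigma_C^2)$), then it is optimal, yielding (\ref{eq42}).

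In the complementary regime the SINR constraint must bind, so $\lambda_{11}=\Gamma_1\sigma_C^2/\|\mathbf{h}_1\|^2$, and the residual power $P_T-\lambda_{11}$ is distributed among $\lambda_{22},\dots,\lambda_{N_tN_t}$ to minimize $\sum_{i\ge 2}1/\lambda_{ii}$. Applying AM--HM again on the remaining $N_t-1$ variables yields the uniform allocation $\lambda_{ii}=(P_T-\lambda_{11})/(N_t-1)$ for $i\ge 2$, which simplifies to the second expression in (\ref{eq44}); the feasibility requirement $P_T-\lambda_{11}>0$ is exactly $\Gamma_1\le P_T\|\mathbf{h}_1\|^2/\sigma_C^2$. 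Reassembling $\mathbf{W}_1=\lambda_{11}\mathbf{u}_1\mathbf{u}_1^H$ with $\mathbf{u}_1=\mathbf{h}_1/\|\mathbf{h}_1\|$ recovers (\ref{eq43}). The main obstacle is not the arithmetic but justifying why the reduction to this scalar problem loses no optimality: this is precisely what Lemma~2 supplies, so once it is invoked the rest is a textbook application of AM--HM together with a feasibility check on the two regimes.
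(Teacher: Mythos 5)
Your proposal is correct and follows the same overall route as the paper: both invoke Lemma~2 to diagonalize $\mathbf{R}_X$ in the eigenbasis of $\mathbf{Q}_1$ and thereby collapse (\ref{eq38}) to the scalar convex program $\min \sum_i \lambda_{ii}^{-1}$ subject to $\lambda_{11}\ge \Gamma_1\sigma_C^2/\|\mathbf{h}_1\|^2$ and $\sum_i\lambda_{ii}\le P_T$, which is exactly (\ref{eq71}) in Appendix~E. The only difference is in how that reduced problem is solved: the paper writes down the Lagrangian and works through the KKT conditions, deducing $\mu>0$ (power constraint tight) and splitting on $\omega=0$ versus $\omega>0$ (SINR constraint inactive/active), whereas you solve it directly by two applications of AM--HM plus a case split on whether the equal-power allocation is feasible. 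Your route is more elementary and arguably cleaner; the one step you gloss over is the claim that in the second regime ``the SINR constraint must bind,'' which deserves a one-line justification (e.g., strict convexity of the objective together with the fact that the unique unconstrained minimizer $\lambda_{ii}=P_T/N_t$ violates the SINR constraint implies the constrained optimum lies on that constraint's boundary) --- this is precisely the role played by the dual variable $\omega>0$ in the paper's KKT argument. Both treatments are equally loose about the boundary case $\Gamma_1 = P_T\|\mathbf{h}_1\|^2/\sigma_C^2$, where the residual power is zero and $\mathbf{R}_X$ becomes singular.
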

\begin{proof}
See Appendix E.
\end{proof}
The closed-form solutions obtained from Theorem 3 naturally satisfy the rank constraints, i.e., $\operatorname{rank}\left(\mathbf{W}_1\right) = 1$. Hence it is also the optimal solution to (\ref{eq35}) for the single-user case.
\subsection{Rank-One Optimal Solution of (\ref{eq37}) in the Multi-User Case}
Although the convex relaxation (\ref{eq37}) can be optimally solved using numerical tools, it is not guaranteed to yield rank-one or low-rank solutions. In the case that the solutions are with high ranks, one may obtain the low-rank approximations by employing various of methods, e.g., eigenvalue decomposition or Gaussian randomization. Nevertheless, the eigenvalue based approximation might not be accurate given the matrix inverse operation involved in the objective function, and the Gaussian randomization could be computationally expensive. Below we propose a constructive method to extract the exact rank-one optimum directly from the solutions of the convex-relaxation problem (\ref{eq37}).
\\\indent Similar to (\ref{eq38}), problem (\ref{eq37}) can be equivalently formulated as
\begin{equation}\label{eq45}
  \begin{gathered}
  \mathop {\min }\limits_{\left\{ {{{\mathbf{W}}_k}} \right\}_{k = 1}^K,{{\mathbf{R}}_X}} \;\operatorname{tr} \left( {{\mathbf{R}}_X^{ - 1}} \right) \hfill \\
  \;\;\;\;\;\;\text{s.t.}\;\operatorname{tr} \left( {{{\mathbf{Q}}_k}{{\mathbf{W}}_k}} \right) - {\Gamma _k}\operatorname{tr} \left( {{{\mathbf{Q}}_k}\left( {{{\mathbf{R}}_X} - {{\mathbf{W}}_k}} \right)} \right) \ge {\Gamma _k}\sigma _C^2,\forall k, \hfill \\
  \;\;\;\;\;\;\;\;\;\;\;\;\operatorname{tr} \left( {{{\mathbf{R}}_X}} \right) \le {P_T},{{\mathbf{R}}_X}  \succeq \sum\nolimits_{k = 1}^K {{{\mathbf{W}}_k}} ,{{\mathbf{W}}_k} \succeq {\mathbf{0}},\forall k. \hfill \\
\end{gathered}
\end{equation}
By denoting the optimal solution of (\ref{eq45}) as ${{{\mathbf{\bar R}}}_X},\left\{ {{{{\mathbf{\bar W}}}_k}} \right\}_{k = 1}^K$, we have $\operatorname{rank}\left({{{\mathbf{\bar R}}}_X}\right) = N_t, \operatorname{rank}\left({{{\mathbf{\bar W}}}_k}\right) \ge 1, \forall k$. If $\operatorname{rank}\left({{{\mathbf{\bar W}}}_k}\right) = 1, \forall k$, then ${{{\mathbf{\bar R}}}_X},\left\{ {{{{\mathbf{\bar W}}}_k}} \right\}_{k = 1}^K$ are also optimal for (\ref{eq35}). Otherwise, one can extract the optimal solution of (\ref{eq35}) from ${{{\mathbf{\bar R}}}_X},\left\{ {{{{\mathbf{\bar W}}}_k}} \right\}_{k = 1}^K$ by relying on the following theorem.
\begin{thm}
  Given an optimal solution ${{{\mathbf{\bar R}}}_X},\left\{ {{{{\mathbf{\bar W}}}_k}} \right\}_{k = 1}^K$ of (\ref{eq45}), the following ${{{\mathbf{\tilde R}}}_X},\left\{ {{{{\mathbf{\tilde W}}}_k}} \right\}_{k = 1}^K$ is also an optimal solution:
  \begin{equation}\label{eq46}
    {{{\mathbf{\tilde R}}}_X} = {{{\mathbf{\bar R}}}_X},\quad {{{\mathbf{\tilde W}}}_k} = \frac{{{{{\mathbf{\bar W}}}_k}{{\mathbf{Q}}_k}{\mathbf{\bar W}}_k^H}}{{\operatorname{tr} \left( {{{\mathbf{Q}}_k}{{{\mathbf{\bar W}}}_k}} \right)}}, \forall k \le K,
  \end{equation}
  where $\operatorname{rank}\left({{{\mathbf{\tilde W}}}_k}\right) = 1, \forall k \le K$.
\end{thm}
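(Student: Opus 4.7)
The plan is to show that the construction in (\ref{eq46}) yields another feasible solution of (\ref{eq45}) attaining the same objective value, with each $\tilde{\mathbf{W}}_k$ of rank one. Since $\tilde{\mathbf{R}}_X = \bar{\mathbf{R}}_X$, both the objective $\operatorname{tr}({\mathbf{R}}_X^{-1})$ and the trace-power constraint $\operatorname{tr}({\mathbf{R}}_X) \le P_T$ are inherited without change. The work therefore reduces to (i) verifying that each $\tilde{\mathbf{W}}_k$ is rank-one PSD and preserves the $k$-th SINR inequality, and (ii) establishing the Loewner-order inequality $\bar{\mathbf{R}}_X \succeq \sum_k \tilde{\mathbf{W}}_k$.

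For (i), I would first rewrite the construction using $\mathbf{Q}_k = {\mathbf{h}}_k {\mathbf{h}}_k^H$ and the Hermiticity $\bar{\mathbf{W}}_k^H = \bar{\mathbf{W}}_k$ as
\[
\tilde{\mathbf{W}}_k = \frac{(\bar{\mathbf{W}}_k {\mathbf{h}}_k)(\bar{\mathbf{W}}_k {\mathbf{h}}_k)^H}{{\mathbf{h}}_k^H \bar{\mathbf{W}}_k {\mathbf{h}}_k},
\]
which is manifestly rank-one and PSD; the denominator is strictly positive because any feasible $\bar{\mathbf{W}}_k$ must obey ${\mathbf{h}}_k^H \bar{\mathbf{W}}_k {\mathbf{h}}_k = \operatorname{tr}(\mathbf{Q}_k \bar{\mathbf{W}}_k) \ge \Gamma_k \sigma_C^2 > 0$. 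A direct trace computation then yields
\[
\operatorname{tr}\bigl(\mathbf{Q}_k \tilde{\mathbf{W}}_k\bigr) = \frac{({\mathbf{h}}_k^H \bar{\mathbf{W}}_k {\mathbf{h}}_k)^2}{{\mathbf{h}}_k^H \bar{\mathbf{W}}_k {\mathbf{h}}_k} = \operatorname{tr}\bigl(\mathbf{Q}_k \bar{\mathbf{W}}_k\bigr),
\]
so the desired-signal term against $\mathbf{Q}_k$ is preserved exactly. Combined with $\tilde{\mathbf{R}}_X = \bar{\mathbf{R}}_X$, the $k$-th SINR inequality in (\ref{eq45}) reduces identically to the one already satisfied by the original solution.

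The main obstacle is (ii). I would decompose
\[
\tilde{\mathbf{R}}_X - \sum\nolimits_{k=1}^{K} \tilde{\mathbf{W}}_k = \Bigl(\bar{\mathbf{R}}_X - \sum\nolimits_{k=1}^{K} \bar{\mathbf{W}}_k\Bigr) + \sum\nolimits_{k=1}^{K} \bigl(\bar{\mathbf{W}}_k - \tilde{\mathbf{W}}_k\bigr),
\]
where the first bracket is PSD by feasibility of the $\bar{\cdot}$ solution. For each summand in the second term, introducing the PSD square root $\bar{\mathbf{W}}_k^{1/2}$ supplies the factorization $\bar{\mathbf{W}}_k - \tilde{\mathbf{W}}_k = \bar{\mathbf{W}}_k^{1/2}(\mathbf{I} - \mathbf{P}_k)\bar{\mathbf{W}}_k^{1/2}$, where
\[
\mathbf{P}_k \triangleq \frac{\bar{\mathbf{W}}_k^{1/2} {\mathbf{h}}_k {\mathbf{h}}_k^H \bar{\mathbf{W}}_k^{1/2}}{{\mathbf{h}}_k^H \bar{\mathbf{W}}_k {\mathbf{h}}_k}
\]
satisfies $\mathbf{P}_k^H = \mathbf{P}_k$ and $\mathbf{P}_k^2 = \mathbf{P}_k$, so it is a rank-one orthogonal projection. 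Hence $\mathbf{I} - \mathbf{P}_k \succeq \mathbf{0}$, which gives $\bar{\mathbf{W}}_k \succeq \tilde{\mathbf{W}}_k$ for every $k$; summing, the whole expression remains PSD. This closes the feasibility check, and together with the unchanged objective and power budget, shows that $\{\tilde{\mathbf{R}}_X,\tilde{\mathbf{W}}_k\}$ is an optimal solution of (\ref{eq45}) in which every $\tilde{\mathbf{W}}_k$ has rank one.
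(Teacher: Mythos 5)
Your proposal is correct, and it follows the same strategy that the paper sketches in the prose after Theorem~4 (the paper itself defers the formal proof to reference \cite{9124713}): keep $\mathbf{R}_X$ fixed so the objective and power budget are untouched, observe that $\operatorname{tr}(\mathbf{Q}_k\tilde{\mathbf{W}}_k)=\operatorname{tr}(\mathbf{Q}_k\bar{\mathbf{W}}_k)$ so every SINR constraint is inherited verbatim, and absorb the leftover $\bar{\mathbf{W}}_k-\tilde{\mathbf{W}}_k$ into the auxiliary covariance. The one step the paper's sketch glosses over is precisely the Loewner inequality $\bar{\mathbf{W}}_k\succeq\tilde{\mathbf{W}}_k$, and your factorization $\bar{\mathbf{W}}_k-\tilde{\mathbf{W}}_k=\bar{\mathbf{W}}_k^{1/2}(\mathbf{I}-\mathbf{P}_k)\bar{\mathbf{W}}_k^{1/2}$ with $\mathbf{P}_k$ a rank-one orthogonal projection settles it cleanly; together with the strict positivity of $\mathbf{h}_k^H\bar{\mathbf{W}}_k\mathbf{h}_k$ (which also guarantees $\bar{\mathbf{W}}_k\mathbf{h}_k\neq\mathbf{0}$, hence exact rank one), your argument is a complete, self-contained proof of the statement.
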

\begin{proof}
  {\color{red}See \cite{9124713}.}
\end{proof}
The idea behind Theorem 4 is simple, i.e., to preserve the useful signal power of the $k$-th user by replacing ${{{\mathbf{\bar W}}}_k}$ with the rank-one matrix ${{{\mathbf{\tilde W}}}_k}$, and then put the difference between ${{{\mathbf{\tilde W}}}_k}$ and ${{{\mathbf{\bar W}}}_k}$ into the covariance matrix of the extra radar beamformer. This guarantees that $\mathbf{\bar R}_X$ is unchanged, and thus the objective value and the transmit power remain the same. Moreover, the useful signal power and the interference keep unchanged for each user, which suggests that the resulting SINR is the same as before. Therefore, ${{{\mathbf{\tilde R}}}_X},\left\{ {{{{\mathbf{\tilde W}}}_k}} \right\}_{k = 1}^K$ is feasible and optimal. We refer readers to \cite{9124713} for a detailed proof. Based on Theorem 4, the first $K$ columns (the communication beamformer $\mathbf{W}_C$) of the optimal beamformer $\mathbf{\tilde W}_D$ for the original problem (\ref{eq35}) can be straightforwardly expressed as
\begin{equation}\label{eq47}
  {{\mathbf{w}}_k} = \frac{{{{{\mathbf{\bar W}}}_k}{{\mathbf{h}}_k}}}{{\sqrt {\operatorname{tr} \left( {{{\mathbf{Q}}_k}{{{\mathbf{\bar W}}}_k}} \right)} }} = {\left( {{\mathbf{h}}_k^H{{{\mathbf{\bar W}}}_k}{{\mathbf{h}}_k}} \right)^{ - 1/2}}{{{\mathbf{\bar W}}}_k}{{\mathbf{h}}_k}, \forall k \le K.
\end{equation}
Accordingly, the auxiliary beamformer $\mathbf{W}_A$ can be attained as a square-root of ${{\mathbf{\tilde R}}_X} - \sum\nolimits_{k = 1}^K {{{\mathbf{\tilde W}}_k}}$, i.e.,
\begin{equation}\label{eq48}
  {{\mathbf{W}}_A}{\mathbf{W}}_A^H = {{{\mathbf{\tilde R}}}_X} - \sum\nolimits_{k = 1}^K {{{{\mathbf{\tilde W}}}_k}},
\end{equation}
where various approaches can be exploited to extract ${\mathbf{W}}_A$, e.g., Cholesky decomposition or eigenvalue decomposition.

\section{Numerical Results}
In this section, we provide numerical results to verify the advantage of the proposed joint beamforming approaches. Without loss of generality, we consider a DFRC BS that is equipped with $N_t = 16$ and $N_r = 20$ antennas for its transmitter and receiver. The power budget is $P_T = 30\text{dBm}$, the noise variances are set as $\sigma_C^2 = \sigma_R^2 = 0\text{dBm}$, and the DFRC frame length is set as $L = 30$. Moreover, for point target scenario, we assume that the target angle is $\theta = 0^\circ$. In the case of extended targets, we assume that the entries of the target response matrix $\mathbf{G}$ are i.i.d. Gaussian distributed with zero mean and unit variance. Since for extended target estimation, the CRB is equal to the MSE, we will use MSE to measure the estimation performance of $\mathbf{G}$.
\subsection{Verification of the Closed-form Solutions}
We commence by examining the correctness of the closed-form solutions attained for both point and extended target scenarios, with the presence of a single communication user. The results are shown in Fig. 2, where the radar estimation performance for point and extended targets are shown via the root-CRB of the target angle and the MSE of the target response matrix, respectively. The closed-form solutions match well with their numerical counterparts. Moreover, the increase of the required SINR at the users leads to rising CRB and MSE. Fortunately, the target estimation errors can be maintained at the lowest level for both cases when the required SINR is below 30dB.
\begin{figure}[!t]
    \centering
    \includegraphics[width=0.95\columnwidth]{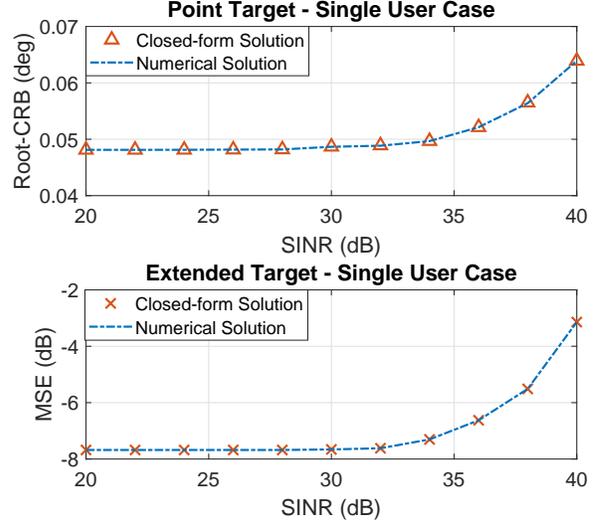}
    \caption{Closed-form and numerical solutions for the single-user scenario.}
    \label{fig:2}
\end{figure}
\subsection{Joint Beamforming for Point Target and Multiple Users}
Next, we investigate the performance of the proposed joint beamforming method for the scenario of point target and multiple users in Figs. 3--5. Our benchmark techniques are the DFRC beamforming schemes proposed in \cite{9124713} and \cite{8288677}, where the beamforming matrix is designed such that a given radar-only beampattern is achieved/approximated under individual SINR constraints for downlink communication users. For convenience, we use the term ``Beampattern Approx. Design 1", ``Beampattern Approx. Design 2", and ``Proposed CRB-Min Design" to represent the DFRC beamforming technique in \cite{9124713}, \cite{8288677}, and the proposed method, respectively. For the beampattern approximation methods, we define a 3dB beamwidth of $10^\circ$. We first show the resultant beampatterns of the three techniques in Fig. 3, where the number of users and their required SINRs are assumed to be $K = 4$ and $\Gamma_k = 15\text{dB}, \forall k$, respectively. It can be seen that all the three beamformers correctly focus their mainlobe towards $0^\circ$. We observe that all the obtained beampatterns show random fluctuations in their sidelobe regions, due to the imposed SINR constraints for users. Moreover, the proposed CRB-Min method radiates the highest power towards the target angle among all the three techniques.
\\\indent The performance for target estimation is explicitly shown in Fig. 4 in terms of the root-MSE (RMSE), with the increase of the receive SNR of the echo signal, which is defined as $\operatorname{SNR}_{\text{radar}} = \frac{\left|\alpha\right|^2LP_T}{\sigma_R^2}$. We obtain the MLE for the target angle via exhaustive search \cite{1703855}. As expected, the RMSE is lower-bounded by the corresponding CRB, and in particular, the CRB is tight and can be achieved by the MLE in the high-SNR regime. It can be seen that the proposed method outperforms both beampattern approximation designs, especially when the SNR is low. This proves that the proposed CRB-Min technique can indeed improve the target estimation performance, as compared to conventional approaches.
\begin{figure}[!t]
    \centering
    \includegraphics[width=0.95\columnwidth]{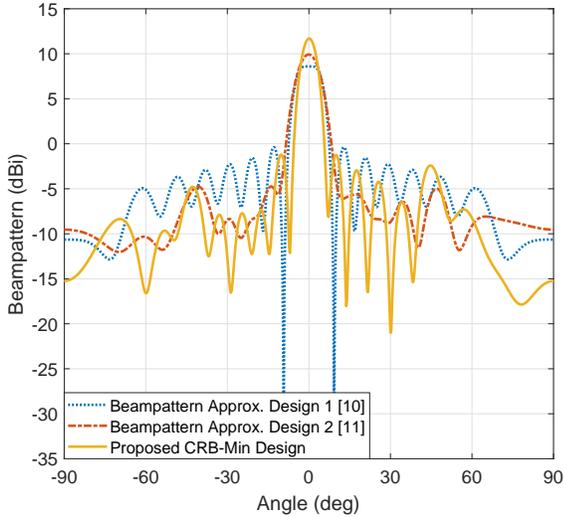}
    \caption{Beampatterns for the scenario of point target and multiple users, with the method proposed in \cite{9124713} and \cite{8288677} as benchmarks. The number of users is $K = 4$, and the SINR threshold is set as $15\text{dB}$.}
    \label{fig:3}
\end{figure}
\begin{figure}[!t]
    \centering
    \includegraphics[width=0.95\columnwidth]{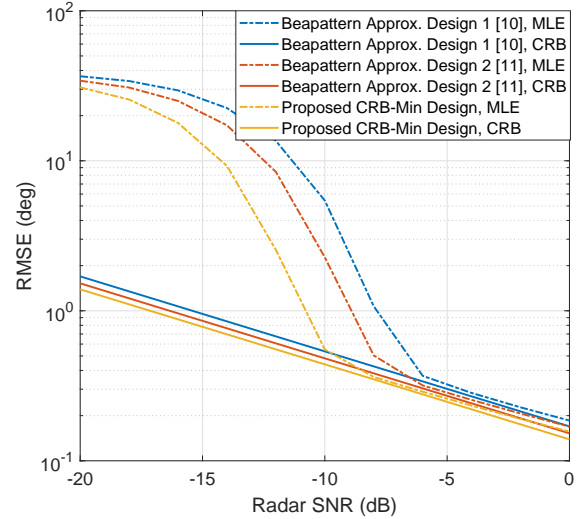}
    \caption{Target estimation performance in the scenario of point target and multiple users, with the method proposed in \cite{9124713} and \cite{8288677} as benchmarks. The number of users is $K = 4$, and the SINR threshold is set as $15\text{dB}$.}
    \label{fig:4}
\end{figure}
\\\indent In Fig. 5, we consider the performance tradeoff between radar and communication. Owing to the increasing SINR threshold of the users, the CRB for target estimation becomes higher. For a smaller number of users, however, the CRB remains at a low level despite that the user's SINR is growing. Again, we see that the proposed technique is superior to both beampattern approximation methods in \cite{9124713} and \cite{8288677}.
\begin{figure}[!t]
    \centering
    \includegraphics[width=0.95\columnwidth]{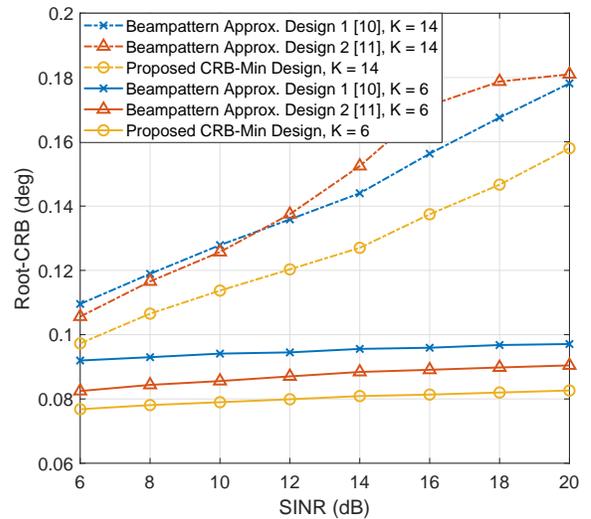}
    \caption{Tradeoff between radar and communication performance in the scenario of point target and multiple users, with the method proposed in \cite{9124713} and \cite{8288677} as benchmarks. The number of users is set as $K = 6$ and $K = 12$, respectively.}
    \label{fig:5}
\end{figure}
\begin{figure}[!t]
    \centering
    \includegraphics[width=0.95\columnwidth]{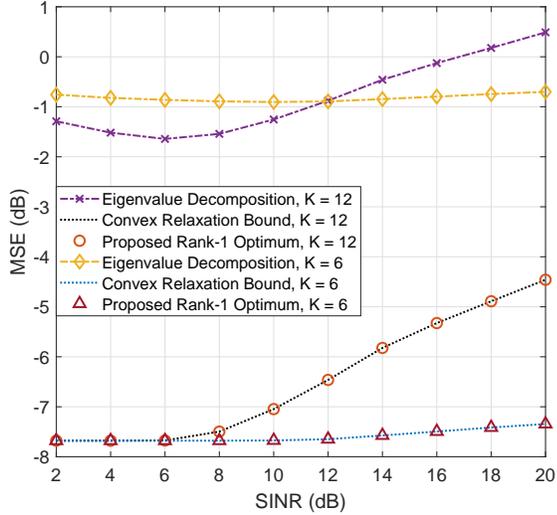}
    \caption{Tradeoff between target estimation MSE and users' SINR for the scenario of extended target and multiple users, in the cases of $K = 12$ and $K = 6$. The eigenvalue decomposition based rank-one approximation of (\ref{eq37}) serves the benchmark.}
    \label{fig:6}
\end{figure}
\begin{figure}[!t]
    \centering
    \includegraphics[width=0.95\columnwidth]{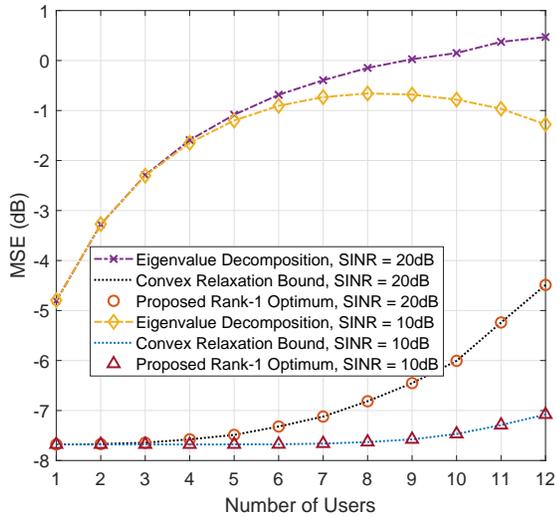}
    \caption{Target estimation MSE with an increasing user number for the scenario of extended target and multiple users, with SINR of $20\text{dB}$ and $10\text{dB}$. The eigenvalue decomposition based rank-one approximation of (\ref{eq37}) serves as the benchmark.}
    \label{fig:7}
\end{figure}
\subsection{Joint Beamforming for Extended Target and Multiple Users}
We study the scenario of the extended target and multiple users in Figs. 6--7. In Fig. 6, we plot the performance tradeoff between the target estimation MSE and the required SINR for users with $K = 12$ and $6$, respectively. The rank-one approximation of the SDR problem (\ref{eq37}) is employed as a benchmark, which is obtained by applying the eigenvalue decomposition on the solution of (\ref{eq37}). We see that by exploiting Theorem 4, we can indeed acquire the globally optimal solution, which is superior to the conventional rank-one approximation methods, e.g., eigenvalue decomposition. On the other hand, the eigenvalue decomposition fails to generate a favorable performance tradeoff between MSE and SINR, as the trends of the corresponding tradeoff curves are not monotonically increasing. More interestingly, the MSE for the proposed rank-one optimum remains at a small level despite that the SINR requirement is increasing, given a moderate number of users, e.g., $K = 6$.
\\\indent Finally in Fig. 7, we show the impact of the communication user number imposed on the radar estimation performance, with the SINR being set as $20\text{dB}$ and $10\text{dB}$, respectively. It is observed that given an increasing user number, the estimation performance becomes worse. Fortunately, the variation of the MSE can be kept within 1dB when the required SINR is 10dB. The results prove again the performance gain of the proposed rank-one optimum over that of the eigenvalue decomposition.
\section{Conclusion}
In this paper, we proposed beamforming designs for joint radar sensing and multi-user communications, for the scenarios of point and extended targets, respectively. In particular, we formulated optimization problems to minimize the CRB of target estimation by imposing SINR constraints for multiple communication users. While the considered problems are non-convex, we derived closed-form optimal solutions for both cases in the presence of a single user. In the scenario of multiple users, we designed the DFRC beamformers by exploiting the SDR approach. We then proved that the globally optimal solutions are achievable for both problems. Numerical results demonstrate that the proposed approaches reach the globally optimal solutions, while significantly outperforming the benchmark techniques in terms of target estimation performance.

\appendices

\section{Proof of Lemma 1}
Consider an optimal solution ${{\mathbf{w}}_1}$. We can always express ${{\mathbf{w}}_1}$ as
  \begin{equation}\label{eq49}
    {{\mathbf{w}}_1} = a{{\mathbf{u}}_\alpha } + b{{\mathbf{u}}_\beta },
  \end{equation}
where $\left\| {{{\mathbf{u}}_\alpha }} \right\| = \left\| {{{\mathbf{u}}_\beta }} \right\| = 1$, and ${{\mathbf{u}}_\alpha } \in \operatorname{span}\left\{ {{\mathbf{a}},{{\mathbf{h}}_1}} \right\}$, ${{\mathbf{u}}_\beta } \bot \operatorname{span}\left\{ {{\mathbf{a}},{{\mathbf{h}}_1}}\right\}$, which are the normalized projections of ${{\mathbf{w}}_1}$ onto $\operatorname{span}\left\{ {{\mathbf{a}},{{\mathbf{h}}_1}} \right\}$ and its null space, respectively. Since the SINR constraint and the power budget are satisfied, we have
\begin{equation}\label{eq50}
{a^2}{\left| {{\mathbf{h}}_1^H{{\mathbf{u}}_\alpha }} \right|^2} \ge \Gamma_1 \sigma _C^2,\quad {\left\| {{{\mathbf{w}}_1}} \right\|^2} = {a^2} + {b^2} \le {P_T},
\end{equation}
due to the fact that only ${{\mathbf{u}}_\alpha }$ contributes to the SINR. Therefore, one can always let $a = \sqrt{P_T}, b = 0$, in which case the objective function is strictly increased without violating the constraints. This implies that the optimal $\mathbf{w}_1$ belongs to $\operatorname{span}\left\{ {{\mathbf{a}},{{\mathbf{h}}_1}} \right\}$.

\section{Proof of Theorem 1}
We first show that the power budget of $\mathbf{w}_1$ should always be fully exploited in order to maximize the objective function ${\left| {{{\mathbf{a}}^H}{{\mathbf{w}}_1}} \right|^2}$. Suppose that there is an optimal solution ${{\mathbf{\tilde w}}_1}$, such that ${\left\| {{\mathbf{\tilde w}}_1} \right\|^2} = \tilde P < {P_T}$. Then we have
\begin{equation}\label{eq51}
  {\left| {{\mathbf{h}}_1^H{{{\mathbf{\tilde w}}}_1}} \right|^2} \ge {\Gamma _1}\sigma _C^2.
\end{equation}
Now let us consider another solution ${{{\mathbf{\bar w}}}_1} = \sqrt {\frac{{{P_T}}}{{\tilde P}}} {{{\mathbf{\tilde w}}}_1}$, which has the power budget of $P_T$. It can be readily verified that
\begin{equation}\label{eq52}
\begin{gathered}
  {\left| {{\mathbf{h}}_1^H{{{\mathbf{\bar w}}}_1}} \right|^2} = \frac{{{P_T}}}{{\tilde P}}{\left| {{\mathbf{h}}_1^H{{{\mathbf{\tilde w}}}_1}} \right|^2} > {\Gamma _1}\sigma _C^2, \hfill \\
  {\left| {{{\mathbf{a}}^H}{{{\mathbf{\bar w}}}_1}} \right|^2} = \frac{{{P_T}}}{{\tilde P}}{\left| {{{\mathbf{a}}^H}{{{\mathbf{\tilde w}}}_1}} \right|^2} > {\left| {{{\mathbf{a}}^H}{{{\mathbf{\tilde w}}}_1}} \right|^2}, \hfill \\
\end{gathered}
\end{equation}
which implies that ${{{\mathbf{\bar w}}}_1}$ is feasible solution that generates higher objective value than that of ${{\mathbf{\tilde w}}_1}$. Therefore, the power budget is fully exploited when the optimum is reached.
\\\indent By noting the above fact, we next consider the case where the SINR constraint is inactive. The solution can be readily obtained by fully allocating the power along the direction of $\mathbf{a}$, i.e., ${{\mathbf{w}}_1} = \sqrt {{P_T}} \frac{{\mathbf{a}}}{{\left\| {\mathbf{a}} \right\|}}$. Now let us discuss the case that the SINR constraint is active. By noting Lemma 1, the optimal $\mathbf{w}_1$ can be expressed as
\begin{equation}\label{eq53}
  \mathbf{w}_1 = x_1{{{\mathbf{u}}}_1}+ x_2{\mathbf{a}}_u,\quad x_1, x_2 \in \mathbb{C},
\end{equation}
since $\operatorname{span}\left\{{\mathbf{a}}_u,{{{\mathbf{u}}}_1}\right\} = \operatorname{span}\left\{{\mathbf{a}},{{{\mathbf{h}}}_1}\right\}$. Accordingly, the problem can be reformulated as
\begin{equation}\label{eq54}
\begin{gathered}
  \mathop {\max }\limits_{{x_1},{x_2}} \;{\left| {{x_1}{{\mathbf{a}}^H}{{\mathbf{u}}_1}+{x_2}{{\mathbf{a}}^H}{{\mathbf{a}}_u}} \right|^2} \hfill \\
  \text{\text{s.t.}}\;\;\;\;{\left| {{x_1}} \right|^2}{\left\| {{{\mathbf{h}}_1}} \right\|^2} = \Gamma_1 \sigma _C^2, \hfill \\
  \;\;\;\;\;\;\;\;\;{\left| {{x_1}} \right|^2}+{\left| {{x_2}} \right|^2} = {P_T}. \hfill \\
\end{gathered}
\end{equation}
It follows that
\begin{equation}\label{eq55}
  {\left| {{x_1}} \right|^2} = \frac{{\Gamma_1 \sigma _C^2}}{{{{\left\| {{{\mathbf{h}}_1}} \right\|}^2}}},{\left| {{x_2}} \right|^2} = {P_T} - \frac{{\Gamma_1 \sigma _C^2}}{{{{\left\| {{{\mathbf{h}}_1}} \right\|}^2}}}.
\end{equation}
To maximize the objective function, the phases of $x_1$ and $x_2$ should be the opposite of that of ${{\mathbf{a}}^H}{{\mathbf{u}}_1}$ and ${{\mathbf{a}}^H}{\mathbf{a}}_u$, respectively, i.e., $x_1$ and $x_2$ should be aligned with the directions of ${{\mathbf{a}}^H}{{\mathbf{u}}_1}$ and ${{\mathbf{a}}^H}{\mathbf{a}}_u$. This results in the expressions in (\ref{eq29}), which completes the proof.

\section{Proof of Proposition 1}
Minimizing $\operatorname{CRB} \left( \theta  \right)$  is equivalent to
  \begin{equation}\label{eq56}
  \begin{gathered}
  \mathop {\max }\limits_{{{\mathbf{R}}_X}\succeq \mathbf{0},t} \;t \hfill \\
  \;\;\text{\text{s.t.}}\;\;\operatorname{tr} \left( {{{{\mathbf{\dot A}}}^H}{\mathbf{\dot A}}{{\mathbf{R}}_X}} \right) \hfill \\
  \;\;\;\;\;\; - {\left| {\operatorname{tr} \left( {{{{\mathbf{\dot A}}}^H}{\mathbf{A}}{{\mathbf{R}}_X}} \right)} \right|^2}{\operatorname{tr} ^{ - 1}}\left( {{{\mathbf{A}}^H}{\mathbf{A}}{{\mathbf{R}}_X}} \right) \ge t. \hfill \\
  \end{gathered}
  \end{equation}
By leveraging the following Schur complement condition \cite{zhang2006schur}
  \begin{equation}\label{eq57}
  \begin{gathered}
  \operatorname{tr} \left( {{{{\mathbf{\dot A}}}^H}{\mathbf{\dot A}}{{\mathbf{R}}_X}} \right) - t - {\left| {\operatorname{tr} \left( {{{{\mathbf{\dot A}}}^H}{\mathbf{A}}{{\mathbf{R}}_X}} \right)} \right|^2}{\operatorname{tr} ^{ - 1}}\left( {{{\mathbf{A}}^H}{\mathbf{A}}{{\mathbf{R}}_X}} \right) \ge 0 \hfill \\
   \Leftrightarrow \left[ {\begin{array}{*{20}{c}}
  {\operatorname{tr} \left( {{{{\mathbf{\dot A}}}^H}{\mathbf{\dot A}}{{\mathbf{R}}_X}} \right) - t}&{\operatorname{tr} \left( {{{{\mathbf{\dot A}}}^H}{\mathbf{A}}{{\mathbf{R}}_X}} \right)} \\
  {\operatorname{tr} \left( {{{\mathbf{A}}^H}{\mathbf{\dot A}}{{\mathbf{R}}_X}} \right)}&{\operatorname{tr} \left( {{{\mathbf{A}}^H}{\mathbf{A}}{{\mathbf{R}}_X}} \right)}
  \end{array}} \right] \succeq {\mathbf{0}}, \hfill \\
  \end{gathered}
  \end{equation}
problem (\ref{eq56}) can be rewritten as (\ref{eq30}), which completes the proof.
\section{Proof of Theorem 2}
Let us define the dual variables for problem (\ref{eq34}), which are $\left\{ {{\mu_1},{\mu_2}, \ldots, {\mu_{K }},{\mu_{T}}} \right\}$ that are associated with $K+1$ linear constraints, and $\left\{ {{\mathbf{Z}_1},{\mathbf{Z}_2}, \ldots,{\mathbf{Z}_{K}},{\mathbf{Z}_{P}}} \right\} \succeq \mathbf{0}$ that are associated with $K+1$ semidefinite constraints. By assuming that the optimality is reached with $\left\{ {{\mu_1},{\mu_2}, \ldots, {\mu_{K }},{\mu_{T}}} \right\}$, $\left\{ {{\mathbf{Z}_1},{\mathbf{Z}_2}, \ldots,{\mathbf{Z}_{K}}, {\mathbf{Z}_{P}}} \right\}$ and $\left\{ {{\mathbf{W}_1},{\mathbf{W}_2}, \ldots, {\mathbf{W}_{K}}} \right\}$, the following complementary conditions hold true
\begin{equation}\label{eq58}
  \begin{gathered}
  -{\mu_k}\left( {\operatorname{tr} \left( {{{\mathbf{Q}}_k}{{\mathbf{W}}_k}} \right) - {\Gamma _k}\sum\nolimits_{i = 1,i \ne k}^{{K}} {\operatorname{tr} \left( {{{\mathbf{Q}}_k}{{\mathbf{W}}_i}} \right) - {\Gamma _k}\sigma _C^2} } \right) = 0,\hfill \\
  {\mu_k} \ge 0,\forall k,\hfill \\
  {\mu_{T}}\left( {\sum\limits_{k = 1}^{{K}} {\operatorname{tr} \left( {{{\mathbf{W}}_k}} \right)}  - {P_T}} \right) = 0,{\mu_{T}} \ge 0, \hfill \\
  \operatorname{tr}\left({\mathbf{Z}}_k{{\mathbf{W}}_k}\right) = 0,{{\mathbf{Z}}_k} \succeq {\mathbf{0}}, \forall k, \hfill \\
  \operatorname{tr}\left({\mathbf{Z}}_{P}\mathbf{P}\right)= 0,{{\mathbf{Z}}_{P}} \succeq {\mathbf{0}},\hfill \\
\end{gathered}
\end{equation}
where
\begin{equation}\label{eq59}
\begin{gathered}
  {\mathbf{P}} \triangleq \left[ {\begin{array}{*{20}{c}}
  {\operatorname{tr} \left( {{{{\mathbf{\dot A}}}^H}{\mathbf{\dot A}}{{\mathbf{R}}_X}} \right) - t}&{\operatorname{tr} \left( {{{{\mathbf{\dot A}}}^H}{\mathbf{A}}{{\mathbf{R}}_X}} \right)} \\
  {\operatorname{tr} \left( {{{\mathbf{A}}^H}{\mathbf{\dot A}}{{\mathbf{R}}_X}} \right)}&{\operatorname{tr} \left( {{{\mathbf{A}}^H}{\mathbf{A}}{{\mathbf{R}}_X}} \right)}
\end{array}} \right] \hfill \\
   = \left[ {\begin{array}{*{20}{c}}
  {{{\left\| {{\mathbf{\dot b}}} \right\|}^2}{{\mathbf{a}}^H}{{\mathbf{R}}_X}{\mathbf{a}} + {{\left\| {\mathbf{b}} \right\|}^2}{{{\mathbf{\dot a}}}^H}{{\mathbf{R}}_X}{\mathbf{\dot a}} - t}&{{{\left\| {\mathbf{b}} \right\|}^2}{{\mathbf{a}}^H}{{\mathbf{R}}_X}{\mathbf{\dot a}}} \\
  {{{\left\| {\mathbf{b}} \right\|}^2}{{{\mathbf{\dot a}}}^H}{{\mathbf{R}}_X}{\mathbf{a}}}&{{{\left\| {\mathbf{b}} \right\|}^2}{{\mathbf{a}}^H}{{\mathbf{R}}_X}{\mathbf{a}}}
\end{array}} \right], \hfill \\
\end{gathered}
\end{equation}
where the second equality holds due to the orthogonal property in (\ref{eq22}). Moreover, the Lagrangian can be formulated as
\begin{equation}\label{eq60}
\begin{gathered}
  \mathcal{L} =  - t - \operatorname{tr} \left( {{{\mathbf{Z}}_{P}}{\mathbf{P}}} \right) - \sum\nolimits_{k = 1}^K {\operatorname{tr} \left( {{{\mathbf{Z}}_k}{{\mathbf{W}}_k}} \right)}-  \hfill \\
   \sum\nolimits_{k = 1}^K {{\mu _k}\left( {\operatorname{tr} \left( {{{\mathbf{Q}}_k}{{\mathbf{W}}_k}} \right) - {\Gamma _k}\sum\nolimits_{i = 1,i \ne k}^K {\operatorname{tr} \left( {{{\mathbf{Q}}_k}{{\mathbf{W}}_i}} \right) - {\Gamma _k}\sigma _C^2} } \right)}  \hfill \\
   + {\mu _{T}}\left( {\sum\limits_{k = 1}^K {\operatorname{tr} \left( {{{\mathbf{W}}_k}} \right)}  - {P_T}} \right). \hfill \\
\end{gathered}
\end{equation}
Let
\begin{equation}\label{eq61}
  {{\mathbf{Z}}_{P}} = \left[ {\begin{array}{*{20}{c}}
  \phi &\beta  \\
  {{\beta ^*}}&\gamma
\end{array}} \right] \succeq \mathbf{0}.
\end{equation}
The derivative of the Lagrangian at the optimum can be given as
\begin{equation}\label{eq62}
\begin{gathered}
  \frac{{\partial \mathcal{L}}}{{\partial t}} =  - 1 + \phi  = 0 \Leftrightarrow \phi  = 1, \hfill \\
  \frac{{\partial \mathcal{L}}}{{\partial {{\mathbf{W}}_k}}} =  - {\mathbf{F}} - {{\mathbf{Z}}_k} - {\mu _k}\left( {1 + {\Gamma _k}} \right){{\mathbf{Q}}_k} \hfill \\
  \;\;\;\;\;\;\;\;\;\;\; + \sum\nolimits_{i = 1}^K {{\mu _i}{\Gamma _i}{{\mathbf{Q}}_i} + } {\mu _{T}}{{\mathbf{I}}_{{N_t}}} = \mathbf{0},\forall k, \hfill \\
\end{gathered}
\end{equation}
where
\begin{equation}\label{eq63}
\begin{gathered}
  {\mathbf{F}} \triangleq \frac{{\partial \operatorname{tr} \left( {{{\mathbf{Z}}_{P}}{\mathbf{P}}} \right)}}{{\partial {\mathbf{W}}_k}} \hfill \\
   = \left( {\phi {{\left\| {{\mathbf{\dot b}}} \right\|}^2} + \gamma {{\left\| {\mathbf{b}} \right\|}^2}} \right){\mathbf{a}}{{\mathbf{a}}^H} + \phi {\left\| {\mathbf{b}} \right\|^2}{\mathbf{\dot a}}{{{\mathbf{\dot a}}}^H} + 2{\left\| {\mathbf{b}} \right\|^2}\operatorname{Re} \left( {\beta {\mathbf{a}}{{{\mathbf{\dot a}}}^H}} \right) \hfill \\
   = \left[ {\begin{array}{*{20}{c}}
  {\mathbf{a}}&{{\mathbf{\dot a}}}
\end{array}} \right]\left[ {\begin{array}{*{20}{c}}
  {\phi {{\left\| {{\mathbf{\dot b}}} \right\|}^2} + \gamma {{\left\| {\mathbf{b}} \right\|}^2}}&{\beta {{\left\| {\mathbf{b}} \right\|}^2}} \\
  {{\beta ^*}{{\left\| {\mathbf{b}} \right\|}^2}}&{\phi {{\left\| {\mathbf{b}} \right\|}^2}}
\end{array}} \right]\left[ \begin{gathered}
  {{\mathbf{a}}^H} \hfill \\
  {{{\mathbf{\dot a}}}^H} \hfill \\
\end{gathered}  \right]. \hfill \\
\end{gathered}
\end{equation}
Since $\phi = 1$, it follows that ${{\mathbf{Z}}_{P}} \ne \mathbf{0}$. By noting that $\mathbf{P} \ne  \mathbf{0}$, both ${{\mathbf{Z}}_{P}}$ and $\mathbf{P}$ should be singular matrices to satisfy ${\operatorname{tr} \left( {{{\mathbf{Z}}_{P}}{\mathbf{P}}} \right)} = 0$. For ${{\mathbf{Z}}_{P}} \succeq \mathbf{0}$, this implies
\begin{equation}\label{eq64}
  \phi  - {\left| \beta  \right|^2}{\gamma ^{ - 1}} = 1 - {\left| \beta  \right|^2}{\gamma ^{ - 1}} = 0 \Leftrightarrow \gamma  = {\left| \beta  \right|^2}.
\end{equation}
By leveraging the relationship in (\ref{eq64}), it can be verified that
\begin{equation}\label{eq65}
\begin{gathered}
  \left[ {\begin{array}{*{20}{c}}
  {\phi {{\left\| {{\mathbf{\dot b}}} \right\|}^2} + \gamma {{\left\| {\mathbf{b}} \right\|}^2}}&{\beta {{\left\| {\mathbf{b}} \right\|}^2}} \\
  {{\beta ^*}{{\left\| {\mathbf{b}} \right\|}^2}}&{\phi {{\left\| {\mathbf{b}} \right\|}^2}}
\end{array}} \right] \hfill \\
   = \left[ {\begin{array}{*{20}{c}}
  {{{\left\| {{\mathbf{\dot b}}} \right\|}^2} + {{\left| \beta  \right|}^2}{{\left\| {\mathbf{b}} \right\|}^2}}&{\beta {{\left\| {\mathbf{b}} \right\|}^2}} \\
  {{\beta ^*}{{\left\| {\mathbf{b}} \right\|}^2}}&{{{\left\| {\mathbf{b}} \right\|}^2}}
\end{array}} \right] \succ {\mathbf{0}}. \hfill \\
\end{gathered}
\end{equation}
Moreover, since ${\mathbf{a}} \bot {\mathbf{\dot a}}$, we have $\mathbf{F} \succeq \mathbf{0}, \operatorname{rank}\left(\mathbf{F}\right) = 2$. The two non-zero eigenvalues of $\mathbf{F}$, denoted as $\lambda_1$ and $\lambda_2$, are shown in (\ref{eq66}). It can be readily observed that $\lambda_1 = \lambda_{\max}\left(\mathbf{F}\right) > \lambda_2$ regardless of the value of $\beta$, under the condition that $N_t \ne N_r$, which is satisfied in general for MIMO radar where $N_r$ is chosen to be larger than $N_t$.
\begin{figure*}[ht]
\normalsize
\newcounter{MYtempeqncnt3}
\setcounter{MYtempeqncnt3}{\value{equation}}
\setcounter{equation}{64}
\begin{equation}\label{eq66}
\begin{gathered}
  {\lambda _1} = \frac{{{N_t}\left( {{{\left\| {{\mathbf{\dot b}}} \right\|}^2} + {{\left| \beta  \right|}^2}{N_r}} \right) + {N_r}{{\left\| {{\mathbf{\dot a}}} \right\|}^2} + \sqrt {{{\left( {{N_t}\left( {{{\left\| {{\mathbf{\dot b}}} \right\|}^2} + {{\left| \beta  \right|}^2}{N_r}} \right) - {N_r}{{\left\| {{\mathbf{\dot a}}} \right\|}^2}} \right)}^2} + 4{{\left| \beta  \right|}^2}{N_t}N_r^2{{\left\| {{\mathbf{\dot a}}} \right\|}^2}} }}{2} \hfill \\
  {\lambda _2} = \frac{{{N_t}\left( {{{\left\| {{\mathbf{\dot b}}} \right\|}^2} + {{\left| \beta  \right|}^2}{N_r}} \right) + {N_r}{{\left\| {{\mathbf{\dot a}}} \right\|}^2} - \sqrt {{{\left( {{N_t}\left( {{{\left\| {{\mathbf{\dot b}}} \right\|}^2} + {{\left| \beta  \right|}^2}{N_r}} \right) - {N_r}{{\left\| {{\mathbf{\dot a}}} \right\|}^2}} \right)}^2} + 4{{\left| \beta  \right|}^2}{N_t}N_r^2{{\left\| {{\mathbf{\dot a}}} \right\|}^2}} }}{2} \hfill \\
\end{gathered}
\end{equation}

\setcounter{equation}{\value{MYtempeqncnt3}}
\hrulefill
\vspace*{4pt}
\end{figure*}
\\\indent Now let us take a close look at $\mathbf{Z}_k$. From (\ref{eq62}) we have
\setcounter{equation}{65}
\begin{equation}\label{eq67}
\begin{gathered}
  {{\mathbf{Z}}_k} = {\mu _{T}}{{\mathbf{I}}_{{N_t}}} - \left( {{\mathbf{F}} - \sum\nolimits_{i = 1}^K {{\mu _i}{\Gamma _i}{{\mathbf{Q}}_i}} } \right) - {\mu _k}\left( {1 + {\Gamma _k}} \right){{\mathbf{Q}}_k} \hfill \\
   \triangleq {\mu _{T}}{{\mathbf{I}}_{{N_t}}} - {\mathbf{\bar F}} - {\mu _k}\left( {1 + {\Gamma _k}} \right){{\mathbf{Q}}_k} \succeq \mathbf{0}, \forall k,\hfill \\
\end{gathered}
\end{equation}
where ${\mathbf{\bar F}} \triangleq {{\mathbf{F}} - \sum\nolimits_{i = 1}^K {{\mu _i}{\Gamma _i}{{\mathbf{Q}}_i}} }$. Note that $\mathbf{Z}_k \succeq \mathbf{0}$ implies that
\begin{equation}\label{eq68}
  {\mu _{T}} \ge {\lambda _{\max }}\left( {{\mathbf{\bar F}} + {\mu _k}\left( {1 + {\Gamma _k}} \right){{\mathbf{Q}}_k}} \right),\forall k,
\end{equation}
where $\lambda_{\max}\left(\cdot\right)$ represents the largest eigenvalue of the input matrix. By observing $\operatorname{tr}\left({\mathbf{Z}}_k{{\mathbf{W}}_k}\right) = 0$ from (\ref{eq58}), $\mathbf{Z}_k$ must be singular since $\mathbf{W}_k \succeq \mathbf{0}, \mathbf{W}_k \ne \mathbf{0}$. Therefore we have
\begin{equation}\label{eq69}
  {\mu _{T}} = {\lambda _{\max }}\left( {{\mathbf{\bar F}} + {\mu _k}\left( {1 + {\Gamma _k}} \right){{\mathbf{Q}}_k}} \right),\forall k.
\end{equation}
Apparently, the rank of ${\mathbf{Z}_k}$ strongly depends on the value of $\mu_k \ge 0$. Let us split the index set $\mathcal{K} = \left\{1,2,\ldots,K \right\}$ into two subsets, i.e.,
\begin{equation}\label{eq78-1}
  {\mathcal{K}_1} = \left\{ {\left. {{k}} \right|{\mu _k} > 0,\forall k} \right\},{\mathcal{K}_2} = \left\{ {\left. {{k}} \right|{\mu _k} = 0,\forall k} \right\}.
\end{equation}
We have $\mathcal{K} = \mathcal{K}_1 \bigcup \mathcal{K}_2$. Next, we discuss the rank of $\mathbf{Z}_k$ and $\mathbf{W}_k$ under the following cases, which cover all the possible values that $\mu_k,\forall k$ may take.
\\\indent {1) \textbf{Case I}: $\left| {{\mathcal{K}_1}} \right| = 0$.}
\\\indent In this case, we have $\mu_k = 0, \forall k$, and all the SINR constraints are ineffective. It follows that
\begin{equation}\label{eq70}
  {{\mathbf{Z}}_k} = {\mu _{T}}{{\mathbf{I}}_{{N_t}}} - {\mathbf{F}}\succeq \mathbf{0},\forall k.
\end{equation}
Again, $\mathbf{Z}_k$ must be singular to ensure a non-zero $\mathbf{W}_k$, which leads to
\begin{equation}\label{eq70-1}
  {\mu _{T}} = {\lambda _{\max }}\left( {\mathbf{F}} \right) = {\lambda _1}.
\end{equation}
Since $\lambda_1 > \lambda_2$, it holds immediately that $\operatorname{rank}\left(\mathbf{Z}_k\right) = N_t - 1, \operatorname{rank}\left(\mathbf{W}_k\right) = 1,\forall k$.
\\\indent {2) \textbf{Case II}: $\left| {{\mathcal{K}_1}} \right| = 1$.}
\\\indent In this case, only one $\mu_k$ is strictly positive, and the remaining ones are zero. Without loss of generality, let $\mu_1 >0, \mu_k = 0, \forall k \ge 2$ for notational convenience. We can express $\mathbf{Z}_k$ as
\begin{equation}\label{eq71-1}
\begin{gathered}
  {{\mathbf{Z}}_1} = {\mu _{T}}{{\mathbf{I}}_{{N_t}}} - {\mathbf{F}} - {\mu _1}{{\mathbf{Q}}_1} \succeq \mathbf{0}, \hfill \\
  {{\mathbf{Z}}_k} = {\mu _{T}}{{\mathbf{I}}_{{N_t}}} - {\mathbf{F}} + {\mu _1}{\Gamma _1}{{\mathbf{Q}}_1} \succeq \mathbf{0},\forall k > 1. \hfill \\
\end{gathered}
\end{equation}
It follows that
\begin{equation}\label{eq72-1}
  {\mu _{T}} = {\lambda _{\max }}\left( {{\mathbf{F}} + {\mu _1}{{\mathbf{Q}}_1}} \right) = {\lambda _{\max }}\left( {{\mathbf{F}} - {\mu _1}{\Gamma _1}{{\mathbf{Q}}_1}} \right).
\end{equation}
Given the semidefiniteness of ${{\mathbf{Q}}_1} = {{\mathbf{h}}_1}{\mathbf{h}}_1^H$, (\ref{eq72-1}) holds only if ${\mathbf{f}}_{\max }^H{{\mathbf{h}}_1} = 0$, where $\mathbf{f}_{\max}$ is the eigenvector of $\mathbf{F}$ corresponding to the largest eigenvalue $\lambda_1$. This also implies that
\begin{equation}\label{eq73-1}
  {\mu _{T}} = {\lambda _1} \Leftrightarrow \operatorname{rank} \left( {{\mu _{T}}{{\mathbf{I}}_{{N_t}}} - {\mathbf{F}}} \right) = {N_t} - 1.
\end{equation}
Hence, we have
\begin{equation}\label{eq74-1}
\begin{gathered}
  {N_t} - 2 \le \operatorname{rank} \left( {{{\mathbf{Z}}_1}} \right) \le {N_t} - 1,\hfill \\
  \operatorname{rank} \left( {{{\mathbf{Z}}_k}} \right) = {N_t} - 1,\forall k \ge 2. \hfill \\
\end{gathered}
\end{equation}
By recalling $\operatorname{tr} \left( {{{\mathbf{Z}}_k}{{\mathbf{W}}_k}} \right) = 0$, ${{\mathbf{W}}_k}$ should be within the null-space of $\mathbf{Z}_k$, which is
\begin{equation}\label{eq75-1}
  \begin{gathered}
  \mathcal{N}\left( {{{\mathbf{Z}}_1}} \right) = \operatorname{span} \left\{ {{{\mathbf{h}}_1},{{\mathbf{f}}_{\max }}} \right\}, \hfill \\
  \mathcal{N}\left( {{{\mathbf{Z}}_k}} \right) = \operatorname{span} \left\{ {{{\mathbf{f}}_{\max }}} \right\},\forall k \ge 2. \hfill \\
\end{gathered}
\end{equation}
Accordingly, $\mathbf{W}_k$ can be expressed as
\begin{equation}\label{eq76-1}
  {{\mathbf{W}}_1} = {a_1}{{\mathbf{h}}_1}{\mathbf{h}}_1^H + {b_1}{{\mathbf{f}}_{\max }}{\mathbf{f}}_{\max }^H,
  {{\mathbf{W}}_k} = {b_k}{{\mathbf{f}}_{\max }}{\mathbf{f}}_{\max }^H, \forall k \ge 2,
\end{equation}
where $a_k \ge 0, b_k \ge 0, \forall k$.
\\\indent If $a_1 = 0$, then $\operatorname{rank}\left(\mathbf{W}_k\right) = 1, \forall k$ holds. Otherwise if $a_1 > 0$, let
\begin{equation}\label{eq77-1}
\begin{gathered}
  {{{\mathbf{W'}}}_1} = {a_1}{{\mathbf{h}}_1}{\mathbf{h}}_1^H, {{{\mathbf{W'}}}_2} = \left( {{b_1} + {b_2}} \right){{\mathbf{f}}_{\max }}{\mathbf{f}}_{\max }^H,\hfill \\
  {{{\mathbf{W'}}}_k} = {{\mathbf{W}}_k},\forall k \ge 3. \hfill \\
\end{gathered}
\end{equation}
It can be readily verified that if $\left\{ {{{\mathbf{W}}_k}} \right\}_{k = 1}^K$ is an optimal solution for problem (\ref{eq34}), then $\left\{ {{{\mathbf{W'}}_k}} \right\}_{k = 1}^K$ is a rank-one optimal solution, due to the fact that the following conditions hold
\begin{subequations}\label{eq77-2}
 \begin{align}
  &\sum\nolimits_{k = 1}^K {{{{\mathbf{W'}}}_k}}  = \sum\nolimits_{k = 1}^K {{{\mathbf{W}}_k}}  = {{\mathbf{R}}_X},  \hfill \\
  &\begin{gathered}
  \left( {1 + {\Gamma _2}} \right)\operatorname{tr} \left( {{{\mathbf{Q}}_2}{{{\mathbf{W'}}}_2}} \right) - {\Gamma _2}\operatorname{tr} \left( {{{\mathbf{Q}}_2}{{\mathbf{R}}_X}} \right) \hfill \\
   \ge \left( {1 + {\Gamma _2}} \right)\operatorname{tr} \left( {{{\mathbf{Q}}_2}{{\mathbf{W}}_k}} \right) - {\Gamma _2}\operatorname{tr} \left( {{{\mathbf{Q}}_2}{{\mathbf{R}}_X}} \right) \ge {\Gamma _2}\sigma _C^2, \hfill \\
\end{gathered}  \hfill \\
  &\begin{gathered}
  \left( {1 + {\Gamma _k}} \right)\operatorname{tr} \left( {{{\mathbf{Q}}_k}{{{\mathbf{W'}}}_k}} \right) - {\Gamma _k}\operatorname{tr} \left( {{{\mathbf{Q}}_k}{{\mathbf{R}}_X}} \right) \hfill \\
   = \left( {1 + {\Gamma _k}} \right)\operatorname{tr} \left( {{{\mathbf{Q}}_k}{{\mathbf{W}}_k}} \right) - {\Gamma _1}\operatorname{tr} \left( {{{\mathbf{Q}}_k}{{\mathbf{R}}_X}} \right) \ge {\Gamma _k}\sigma _C^2,\forall k \ne 2. \hfill \\
\end{gathered}
\end{align}
\end{subequations}
Note that the objective value does not change due to (\ref{eq77-2}a). Moreover, (\ref{eq77-2}b) and (\ref{eq77-2}c) guarantee the feasibility of $\left\{ {{{\mathbf{W'}}_k}} \right\}_{k = 1}^K$. Therefore, in Case II, rank-one optimal solutions can always be attained.
\\\indent {3) \textbf{Case III}: $\left| {{\mathcal{K}_1}} \right| \ge 2$.}
\\\indent In this case, we observe that there are at least two positive $\mu_k$, and the remaining ones might be zero or positive. Let $M \triangleq \left| {{\mathcal{K}_1}} \right| \ge 2$, and assume, without loss of generality, that
\begin{equation}\label{eq78-1}
  {\mathcal{K}_1} = \left\{ 1,2,\ldots,M \right\},{\mathcal{K}_2} = \left\{M+1,M+2,\ldots,K \right\}.
\end{equation}
Let ${\mathbf{\tilde H}} = {\left[ {{{\mathbf{h}}_1},{{\mathbf{h}}_2}, \ldots, {{\mathbf{h}}_M}} \right]^H}$, ${\mathbf{\bar A}} = \left[ {{\mathbf{a}},{\mathbf{\dot a}}} \right]$, and define ${\mathbf{D}} \triangleq {\mathbf{\tilde H}}{\mathbf{\bar A}}$. The following lemma holds.
\begin{lemma}
  If ${\mathbf{D}}$ is of full column rank, then ${\mu _{T}}{{\mathbf{I}}_{{N_t}}} - {\mathbf{\bar F}}\succ \mathbf{0}$.
\end{lemma}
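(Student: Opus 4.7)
The plan is to argue by contradiction. From the KKT relation \eqref{eq67}, $\mathbf{Z}_k = \mu_T \mathbf{I}_{N_t} - \mathbf{\bar F} - \mu_k(1+\Gamma_k)\mathbf{Q}_k \succeq \mathbf{0}$ and $\mu_k(1+\Gamma_k)\mathbf{Q}_k \succeq \mathbf{0}$, so $\mu_T \mathbf{I}_{N_t} - \mathbf{\bar F}$ is at least positive semidefinite. Suppose for contradiction that it is singular, and let $\mathbf{v}$ be a corresponding unit eigenvector, so that $\mathbf{\bar F}\mathbf{v} = \mu_T \mathbf{v}$ and $\mathbf{v}$ attains $\lambda_{\max}(\mathbf{\bar F}) = \mu_T$.

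The first key step is to show $\mathbf{h}_k^H\mathbf{v}=0$ for every $k\in\mathcal{K}_1$. Applying the Rayleigh quotient characterization from \eqref{eq69} gives
\begin{equation*}
\mu_T \;=\; \lambda_{\max}\bigl(\mathbf{\bar F} + \mu_k(1+\Gamma_k)\mathbf{Q}_k\bigr) \;\ge\; \mathbf{v}^H \mathbf{\bar F}\mathbf{v} + \mu_k(1+\Gamma_k)|\mathbf{h}_k^H\mathbf{v}|^2 \;=\; \mu_T + \mu_k(1+\Gamma_k)|\mathbf{h}_k^H\mathbf{v}|^2,
\end{equation*}
and since $\mu_k>0$ and $1+\Gamma_k>0$ for $k\in\mathcal{K}_1$, this forces $\mathbf{h}_k^H\mathbf{v}=0$. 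Equivalently, $\mathbf{\tilde H}\mathbf{v}=\mathbf{0}$. Because $\mu_k=0$ for $k\in\mathcal{K}_2$, one then has
\begin{equation*}
\mathbf{\bar F}\mathbf{v} \;=\; \mathbf{F}\mathbf{v} - \sum_{k\in\mathcal{K}_1}\mu_k\Gamma_k\,\mathbf{h}_k\mathbf{h}_k^H\mathbf{v} \;=\; \mathbf{F}\mathbf{v},
\end{equation*}
so $\mathbf{F}\mathbf{v}=\mu_T\mathbf{v}$.

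The second key step is to place $\mathbf{v}$ inside $\operatorname{span}\{\mathbf{a},\mathbf{\dot a}\}$. From the factorization in \eqref{eq63}, $\mathbf{F}=\mathbf{\bar A}\mathbf{M}\mathbf{\bar A}^H$ with the $2\times 2$ core $\mathbf{M}$ strictly positive definite by \eqref{eq65}. Hence $\operatorname{range}(\mathbf{F})=\operatorname{span}\{\mathbf{a},\mathbf{\dot a}\}$, and since $\mu_T>0$ (complementary slackness forces the power budget to be active at the optimum because the CRB strictly decreases under uniform rescaling of $\mathbf{R}_X$), the identity $\mathbf{v}=\mu_T^{-1}\mathbf{F}\mathbf{v}$ yields $\mathbf{v}\in\operatorname{span}\{\mathbf{a},\mathbf{\dot a}\}$. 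Therefore $\mathbf{v}=\mathbf{\bar A}\mathbf{c}$ for some nonzero $\mathbf{c}\in\mathbb{C}^2$, and combining this with $\mathbf{\tilde H}\mathbf{v}=\mathbf{0}$ gives $\mathbf{D}\mathbf{c}=\mathbf{\tilde H}\mathbf{\bar A}\mathbf{c}=\mathbf{0}$. Full column rank of $\mathbf{D}$ then forces $\mathbf{c}=\mathbf{0}$, contradicting $\mathbf{v}\ne\mathbf{0}$ and completing the proof.

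The main obstacle I anticipate is cleanly justifying $\mu_T>0$, since without it the range argument in the second step collapses (the zero eigenspace of $\mathbf{F}$ is exactly the orthogonal complement of $\operatorname{span}\{\mathbf{a},\mathbf{\dot a}\}$, and the rank condition on $\mathbf{D}$ says nothing there). The activation of the power constraint is intuitive but needs to be argued from the monotone dependence of $\operatorname{CRB}(\theta)$ on $\mathbf{R}_X$; alternatively one can perturb any candidate with $\mu_T=0$ to show it violates stationarity in the direction of identity, ruling that case out directly.
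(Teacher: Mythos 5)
Your proof is correct and follows essentially the same contradiction argument as the paper's proof of Lemma 3: assume $\mu_T=\lambda_{\max}\left(\mathbf{\bar F}\right)$ with unit eigenvector $\mathbf{v}$, use the Rayleigh quotient against $\lambda_{\max}\left(\mathbf{\bar F}+\mu_k\left(1+\Gamma_k\right)\mathbf{Q}_k\right)=\mu_T$ to force $\mathbf{h}_k^H\mathbf{v}=0$ for all $k\in\mathcal{K}_1$, conclude $\mathbf{v}\in\operatorname{span}\{\mathbf{a},\mathbf{\dot a}\}$, and contradict the full column rank of $\mathbf{D}$. Your route to the containment step, via $\mathbf{\bar F}\mathbf{v}=\mathbf{F}\mathbf{v}$ and $\operatorname{range}\left(\mathbf{F}\right)=\operatorname{span}\{\mathbf{a},\mathbf{\dot a}\}$, is in fact tighter than the paper's (which passes through the larger span $\{\mathbf{a},\mathbf{\dot a},\mathbf{h}_1,\ldots,\mathbf{h}_M\}$ and then intersects with the orthogonal complement of the $\mathbf{h}_k$'s, an inference that is not valid for arbitrary subspaces), and the $\mu_T>0$ caveat you flag is a genuine but shared omission, since the paper's range argument likewise implicitly requires the eigenvalue $\mu_T$ to be nonzero.
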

\begin{proof}
  We shall prove this lemma by contradiction. First we note that ${\mu _{T}}{{\mathbf{I}}_{{N_t}}} - {\mathbf{\bar F}}\succeq \mathbf{0}$ holds from (\ref{eq68}), given the non-negativity of ${\mu _k}\left( {1 + {\Gamma _k}} \right){{\mathbf{Q}}_k}$. It follows that
  \begin{equation}\label{eq79-1}
    {\mu _{T}} \ge {\lambda _{\max }}\left( {{\mathbf{\bar F}}} \right).
  \end{equation}
  Now suppose that ${\mu _{T}}{{\mathbf{I}}_{{N_t}}} - {\mathbf{\bar F}}\succ \mathbf{0}$ does not hold, and hence ${\mu _{T}} = {\lambda _{\max }}\left( {{\mathbf{\bar F}}} \right)$. Let ${\mathbf{\bar f}}_{\max}$ be the eigenvector of ${{\mathbf{\bar F}}}$ corresponding to ${\lambda _{\max }}\left( {{\mathbf{\bar F}}} \right)$, i.e.,
  \begin{equation}\label{eq80-1}
    {{\mathbf{\bar f}}_{\max} ^H}{\mathbf{\bar F}}{\mathbf{\bar f}}_{\max} = {\lambda _{\max }}\left( {{\mathbf{\bar F}}} \right) = \mu _{T}, {\left\| {{\mathbf{\bar f}}_{\max}} \right\|^2} = 1.
  \end{equation}
  We then have
  \begin{equation}\label{eq81-1}
  \begin{gathered}
  \mu _{T} + {\mu _k}\left( {1 + {\Gamma _k}} \right){{\mathbf{\bar f}}_{\max} ^H}{{\mathbf{Q}}_k}{\mathbf{\bar f}}_{\max}  \hfill \\ = {{\mathbf{\bar f}}_{\max} ^H}\left( {{\mathbf{\bar F}} + {\mu _k}\left( {1 + {\Gamma _k}} \right){{\mathbf{Q}}_k}} \right){\mathbf{\bar f}}_{\max} \hfill \\
   \le {\lambda _{\max }}\left( {{\mathbf{\bar F}} + {\mu _k}\left( {1 + {\Gamma _k}} \right){{\mathbf{Q}}_k}} \right) = \mu_{T}, \hfill \\
  \end{gathered}
  \end{equation}
  where the first equality holds from the assumption, and the last inequality in (\ref{eq81-1}) holds by the definition of the largest eigenvalue. Given $\mu_k > 0, \forall k \le M$, we have
  \begin{equation}\label{eq82-1}
    {{{\mathbf{\bar f}}_{\max}}^H}{{\mathbf{Q}}_k}{{\mathbf{\bar f}}_{\max}} = 0 \Leftrightarrow {\mathbf{h}}_k^H{{\mathbf{\bar f}}_{\max}} = 0,\forall k \le M.
  \end{equation}
  Due to the definitions of $\mathbf{F}$ and $\mathbf{\bar F}$, we know that
  \begin{equation}\label{eq83-1}
    {{\mathbf{\bar f}}_{\max}} \in \operatorname{span}\left\{ {{\mathbf{a}},{\mathbf{\dot a}},{{\mathbf{h}}_1},...,{{\mathbf{h}}_M}} \right\}.
  \end{equation}
  By both (\ref{eq82-1}) and (\ref{eq83-1}), we see that
  \begin{equation}\label{eq84-1}
    {{\mathbf{\bar f}}_{\max}} \in \operatorname{span}\left\{ {{\mathbf{a}},{\mathbf{\dot a}}} \right\}.
  \end{equation}
  Suppose that ${{\mathbf{\bar f}}_{\max}} =  {f _1}{\mathbf{a}} + {f _2}{\mathbf{\dot a}}$. From (\ref{eq82-1}) we have
  \begin{equation}\label{eq85-1}
    {\mathbf{h}}_k^H\left[ {{\mathbf{a}},{\mathbf{\dot a}}} \right]\left[ \begin{gathered}
  {f _1} \hfill \\
  {f _2} \hfill \\
\end{gathered}  \right] = 0,\forall k \le M \Leftrightarrow \mathbf{D}\left[ \begin{gathered}
  {f _1} \hfill \\
  {f _2} \hfill \\
\end{gathered}  \right] = {\mathbf{0}}.
  \end{equation}
  Hence, ${\mathbf{D}}$  is not of full column rank, which leads to contradiction. This completes the proof.
\end{proof}
\begin{corollary}
  If $\left| {{\mathcal{K}_1}} \right| \ge 2$, and ${\mathbf{D}}$ is of full column rank, then $\left| {{\mathcal{K}_1}} \right| = K$, and $\operatorname{rank}\left(\mathbf{W}_k\right) = 1, \forall k$.
\end{corollary}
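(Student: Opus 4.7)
The plan is to combine Lemma 3 with the complementary slackness relations from (\ref{eq58}) and the expression (\ref{eq67}) for $\mathbf{Z}_k$ to pin down the rank structure of each $\mathbf{W}_k$. Under the full column rank assumption on $\mathbf{D}$, Lemma 3 provides the strict definiteness $\mathbf{M} \triangleq \mu_T \mathbf{I}_{N_t} - \bar{\mathbf{F}} \succ \mathbf{0}$, which is the key lever for both conclusions of the corollary.

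First I would establish that $\mathcal{K}_2 = \emptyset$ by contradiction. Suppose some index $k \in \mathcal{K}_2$ exists; then $\mu_k = 0$, so (\ref{eq67}) reduces to $\mathbf{Z}_k = \mathbf{M} \succ \mathbf{0}$. Combined with $\mathbf{W}_k \succeq \mathbf{0}$ and the complementary slackness $\operatorname{tr}(\mathbf{Z}_k \mathbf{W}_k) = 0$ in (\ref{eq58}), this forces $\mathbf{W}_k = \mathbf{0}$. Substituting back into the $k$-th SINR constraint yields $-\Gamma_k \sum_{i \ne k} \operatorname{tr}(\mathbf{Q}_k \mathbf{W}_i) \ge \Gamma_k \sigma_C^2$, which is infeasible because the left-hand side is non-positive while the right-hand side is strictly positive. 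Hence $\mathcal{K}_2 = \emptyset$ and $|\mathcal{K}_1| = K$.

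Next I would establish the rank-one property. For every $k \in \{1,\dots,K\}$, (\ref{eq67}) now reads $\mathbf{Z}_k = \mathbf{M} - \mu_k(1+\Gamma_k)\mathbf{h}_k \mathbf{h}_k^H$, which is a rank-one PSD perturbation of the positive definite matrix $\mathbf{M}$. A standard rank inequality then gives $\operatorname{rank}(\mathbf{Z}_k) \ge N_t - 1$. Meanwhile, the same feasibility argument used above shows $\mathbf{W}_k \ne \mathbf{0}$, and complementary slackness with both $\mathbf{Z}_k, \mathbf{W}_k \succeq \mathbf{0}$ implies $\mathbf{Z}_k \mathbf{W}_k = \mathbf{0}$, so the column space of $\mathbf{W}_k$ lies in the null space of $\mathbf{Z}_k$, a space of dimension at most one. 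Hence $\operatorname{rank}(\mathbf{W}_k) = 1$ for all $k$.

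The substantive work has already been done in Lemma 3, which is the main obstacle behind this corollary; once the strict positive definiteness of $\mathbf{M}$ is in hand, the remaining steps are routine consequences of KKT complementarity and elementary linear algebra. The only subtlety to be careful about is invoking feasibility (strict positivity of $\Gamma_k \sigma_C^2$) twice to rule out $\mathbf{W}_k = \mathbf{0}$, both when establishing $\mathcal{K}_2 = \emptyset$ and when concluding $\operatorname{rank}(\mathbf{W}_k) \ge 1$ in the rank-one step.
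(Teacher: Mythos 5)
Your proposal is correct and follows essentially the same route as the paper: invoke Lemma 3 to get $\mu_T\mathbf{I}_{N_t}-\bar{\mathbf{F}}\succ\mathbf{0}$, deduce that any $k\in\mathcal{K}_2$ would force $\mathbf{W}_k=\mathbf{0}$ and violate feasibility, and then use the rank-one perturbation structure of $\mathbf{Z}_k$ together with complementary slackness to conclude $\operatorname{rank}(\mathbf{W}_k)=1$. Your version is in fact slightly more careful than the paper's at the final step, where you note $\operatorname{rank}(\mathbf{Z}_k)\ge N_t-1$ and separately use $\mathbf{W}_k\ne\mathbf{0}$ to pin down equality, rather than asserting singularity of $\mathbf{Z}_k$ directly.
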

\begin{proof}
  In light of Lemma 3, for those ${k} \in {\mathcal{K}_2}$, we have
  \begin{equation}\label{eq86-1}
    {{\mathbf{Z}}_k} = {\mu _{T}}{{\mathbf{I}}_{{N_t}}} - {\mathbf{\bar F}} \succ {\mathbf{0}}, \forall k \ge M+1.
  \end{equation}
  This suggests that $\mathbf{W}_k = \mathbf{0}, \forall k \ge M+1$, which is infeasible. Therefore, all $\mu_k$ should be positive to ensure that $\mathbf{Z}_k$ is singular, in which case $\left| {{\mathcal{K}_2}} \right| = 0$, and thereby $\left| {{\mathcal{K}_1}} \right| = K$. In this case, all $\mathbf{Z}_k$ can be expressed as in (\ref{eq67}), with $\mu_k >0, \forall k$. As a consequence, $\operatorname{rank}\left(\mathbf{Z}_k\right) = N_t-1, \forall k$, as each of them is obtained by subtracting a rank-one semidefinite matrix from a full-rank positive-definite matrix. This implies that $\operatorname{rank}\left(\mathbf{W}_k\right) = 1, \forall k$, which completes the proof.
\end{proof}
By Corollary 1, in Case III we have rank-one solutions if $\mathbf{D}$ is of full column rank. Given the above discussions on all the three cases, it holds true that if $\mathbf{H}\mathbf{\bar A}$ is of full column rank, solving problem (\ref{eq34}) always yields rank-one solutions, which completes the proof of Theorem 2.

\section{Proof of Theorem 3}
In light of Lemma 2, the optimization problem (\ref{eq38}) can be equivalently formulated as
\begin{equation}\label{eq71}
\begin{gathered}
  \mathop {\min }\limits_{\left\{ {{\lambda_{ii}}} \right\}_{i = 1}^{{N_t}}} \sum\nolimits_{i = 1}^{{N_t}} {\lambda_{ii}^{ - 1}}  \hfill \\
  \;\;\;{\text{s.t.}}\;\;\;\;{\lambda_{11}}{\left\| {{{\mathbf{h}}_{1}}} \right\|^2} \ge {\Gamma _1}\sigma _C^2, \hfill \\
  \;\;\;\;\;\;\;\;\sum\nolimits_{i = 1}^{{N_t}} {{\lambda_{ii}}}  \le {P_T}, \lambda_{ii} \ge 0, \forall i.\hfill \\
\end{gathered}
\end{equation}
Problem (\ref{eq71}) is convex. Note that we should have $\frac{{{\Gamma _1}\sigma _C^2}}{{{{\left\| {{{\mathbf{h}}_{\mathbf{1}}}} \right\|}^2}}} \le {\lambda _{11}} \le {P_T} \Leftrightarrow {\Gamma _1} \le \frac{{{P_T}{{\left\| {{{\mathbf{h}}_{\mathbf{1}}}} \right\|}^2}}}{{\sigma _C^2}}$ for ensuring the feasibility of the problem. In order to find a closed-form solution, we formulate the Lagrangian of the problem as follows
\begin{equation}\label{eq72}
\begin{gathered}
  \mathcal{L} = \sum\nolimits_{i = 1}^{{N_t}} {\lambda_{ii}^{ - 1}}  + \omega \left( { - {\lambda_{11}} + \frac{{{\Gamma _1}\sigma _C^2}}{{{{\left\| {{{\mathbf{h}}_{1}}} \right\|}^2}}}} \right) \hfill \\
  \;\;\;\;\;\;\;\;\; + \mu \left( {\sum\nolimits_{i = 1}^{{N_t}} {{\lambda_{ii}}}  - {P_T}} \right) - {\eta_i}{\lambda_{ii}}, \hfill \\
\end{gathered}
\end{equation}
where $\omega$, $\mu$, and $\eta_i,\forall i$ are the dual variables. Accordingly, the Karush-Kuhn-Tucker (KKT) conditions can be given as
\begin{subequations}\label{eq73}
 \begin{align}
  \frac{{\partial \mathcal{L}}}{{\partial {\lambda_{11}}}} =  - \lambda_{11}^{ - 2} - \omega  + \mu  - {\eta_1} = 0, & \hfill \\
  \frac{{\partial \mathcal{L}}}{{\partial {\lambda_{ii}}}} =  - \lambda_{ii}^{ - 2} + \mu  - {\eta_i} = 0,\;\;i = 2,3, \ldots {N_t}, &\hfill \\
  \omega \left( { - {\lambda_{11}} + \frac{{{\Gamma _1}\sigma _C^2}}{{{{\left\| {{{\mathbf{h}}_1}} \right\|}^2}}}} \right) = 0, \omega  \ge 0,{\lambda_{11}} \ge \frac{{{\Gamma _1}\sigma _C^2}}{{{{\left\| {{{\mathbf{h}}_1}} \right\|}^2}}}, &\hfill \\
  \mu \left( {\sum\nolimits_{i = 1}^{{N_t}} {{\lambda_{ii}}}  - {P_T}} \right) = 0,\mu  \ge 0,\sum\nolimits_{i = 1}^{{N_t}} {{\lambda_{ii}}}  \le {P_T}, &\hfill \\
  {\eta_i}{\lambda_{ii}} = 0,{\eta_i} \ge 0,{\lambda_{ii}} \ge 0. &
\end{align}
\end{subequations}
It can be immediately observed that $\eta_i = 0, \forall i$, since ${\lambda_{ii}} > 0, \forall i$. Therefore, (\ref{eq73}a) and (\ref{eq73}b) can be rewritten as
\begin{subequations}\label{eq74}
  \begin{align}
    \lambda_{11}^{ - 2} = \mu  - \omega , & \hfill \\
    \lambda_{ii}^{ - 2} = \mu,\;\;i = 2,3, \ldots, {N_t}. &
  \end{align}
\end{subequations}
From (\ref{eq74}b) we see that $\mu > 0$, which indicates that the power budget should always be reached, i.e., ${\sum\nolimits_{i = 1}^{{N_t}} {{\lambda_{ii}}}  = {P_T}}$. We then investigate the value of $\omega$. Suppose that $\omega = 0$, which suggests ${\lambda_{11}} > \frac{{{\Gamma _1}\sigma _C^2}}{{{{\left\| {{{\mathbf{h}}_1}} \right\|}^2}}}$, i.e., the SINR constraint is inactive. In this case, we have $\lambda_{ii}^{ - 2} = \mu, \forall i$. Hence, all $\lambda_{ii}$ should be the same, which is
\begin{equation}\label{eq75}
  \lambda_{ii} = \frac{P_T}{N_t}, \forall i.
\end{equation}
By using (\ref{eq75}) and Lemma 2 we obtain the optimal solution as
\begin{equation}\label{eq76}
  {{\mathbf{W}}_1} = \frac{{{P_T}}}{{{N_t}}}\frac{{{\mathbf{h}}_1{\mathbf{h}}_1^H}}{{{{\left\| {{{\mathbf{h}}_1}} \right\|}^2}}},{{\mathbf{R}}_X} = \frac{{{P_T}}}{{{N_t}}}{{\mathbf{I}}_{{N_t}}}.
\end{equation}
Note that this requires that the following condition holds
\begin{equation}\label{eq77}
  \frac{{{P_T}}}{{{N_t}}} > \frac{{{\Gamma _1}\sigma _C^2}}{{{{\left\| {{{\mathbf{h}}_{\mathbf{1}}}} \right\|}^2}}} \Leftrightarrow {\Gamma _1} < \frac{{{P_T}{{\left\| {{{\mathbf{h}}_{\mathbf{1}}}} \right\|}^2}}}{{{N_t}\sigma _C^2}}.
\end{equation}
On the other hand, if $\omega > 0$, the SINR constraint is active and we have ${\lambda_{11}} = \frac{{{\Gamma _1}\sigma _C^2}}{{{{\left\| {{{\mathbf{h}}_{\mathbf{1}}}} \right\|}^2}}}$. To satisfy the power constraint, it holds that $\sum\nolimits_{ii = 2}^{{N_t}} {{\lambda _{ii}}}  = {P_T} - \frac{{{\Gamma _1}\sigma _C^2}}{{{{\left\| {{{\mathbf{h}}_1}} \right\|}^2}}}$. By noting (\ref{eq74}b), we arrive at
\begin{equation}\label{eq78}
  {\lambda _{ii}} = \frac{{{P_T}{{\left\| {{{\mathbf{h}}_1}} \right\|}^2} - {\Gamma _1}\sigma _C^2}}{{{{\left\| {{{\mathbf{h}}_1}} \right\|}^2}\left( {{N_t} - 1} \right)}},i = 2,3, \ldots ,{N_t}.
\end{equation}
Hence, for the case that ${\Gamma _1} > \frac{{{P_T}{{\left\| {{{\mathbf{h}}_{\mathbf{1}}}} \right\|}^2}}}{{{N_t}\sigma _C^2}}$, we have the optimal solution in the form of
\begin{equation}\label{eq79}
 \begin{gathered}
  {{\mathbf{W}}_1} =  \left( {{\mathbf{u}}_1^H{{\mathbf{R}}_X}{{\mathbf{u}}_1}} \right){{\mathbf{u}}_1}{\mathbf{u}}_1^H = \frac{{{\Gamma _1}\sigma _C^2{\mathbf{h}}_1{\mathbf{h}}_1^H}}{{{{\left\| {{{\mathbf{h}}_1}} \right\|}^4}}}, \hfill \\
  {{\mathbf{R}}_X} = \sum\nolimits_{i = 1}^{{N_t}} {{\lambda _{ii}}{{\mathbf{u}}_i}} {\mathbf{u}}_i^H, \hfill \\
\end{gathered}
\end{equation}
which completes the proof.
\ifCLASSOPTIONcaptionsoff
  \newpage
\fi



\bibliographystyle{IEEEtran}
\bibliography{IEEEabrv,JRC_REF}
\end{document}